% interacttfqsample.tex
% v1.05 - August 2017

\documentclass[]{interact}

%%%%%%%%%%%%%%%%%%%%%%%%%%%%%%%%%%%%%%%%%%%%%%%%%%

%% begin of mypackages
\usepackage{nomencl} % nomenclature
\makenomenclature 

\usepackage[english]{babel}
\usepackage[usenames,dvipsnames,table,pdftex]{xcolor}
\usepackage{amsmath}
\usepackage{amssymb}
\usepackage{amsthm}
\usepackage{amsmath}
\usepackage{graphicx}
\usepackage{url}
\usepackage{float}
\usepackage{tikz}
\usepackage{pgfplots} 				% plot tikz figures
\usepackage[utf8]{inputenc}
\usepackage[hidelinks]{hyperref}
\usepackage{dsfont}   				% to get a nice \1 
\usepackage[siunitx]{circuitikz} % Loading circuitikz with siunitx option

\usepackage{xargs} 				% more than one optional argument
\newcommandx{\F}[2][1=1, 2=2]{\frac{#1}{#2}}    % more that one optional argument

\usepackage{mathtools}
\usepackage[mathscr]{euscript}

%\pgfplotsset{compat=1.9}			% prevent backwards compability mode

\numberwithin{equation}{section}

\newcommand{\w}{\omega}

\newcommand{\p}{\partial}

\newcommand{\pH}{port-Hamiltonian}
\newcommand{\G}{\mathcal{G} }
\newcommand{\V}{\mathcal{V} }

\newcommand{\E}{\mathcal{E} }

%%%%%%%%%%%%%%%%%%%%%%%%%%%%%%%%%%%%%%%%%%%%%%%%%%

%%% ----------- For Claudio's comments ------------
\usepackage[prependcaption,colorinlistoftodos]{todonotes}

% \newcommand{\ones}{\mathbb{1}}

%%%%%%%%%%%%%%%%%%%%%%%%%%%%%%%%%%%%%%%%%%%%%%%%%%

% TEXTCOLORS: 
  % red, sout
  % red, sout
         % blue
\definecolor{new}{rgb}{0.55,0,0.55}
         % purple
   % magenta
  % ForestGreen

% % % UNCOMMENT THIS FOR B&W VERSION
% \renewcommand{\alert}[1]{}                      % use this to ignore red, sout
% \renewcommand{\blue}[1]{{\color{black}#1}}      % use this to ignore blue
% \renewcommand{\new}[1]{{\color{black}#1}}       % use this to ignore purple
% \renewcommand{\good}[1]{{\color{black}#1}}  % ForestGreen
% \renewcommand{\red}[1] {{\color{black}#1}}  %  ignore red 

%%%%%%%%%%%%%%%%%%%%%%%%%%%%%%%%%%%%%%%%%%%%%%%%%%

\DeclareMathOperator{\col}{col}

\DeclareMathOperator{\diag}{diag}

\theoremstyle{plain}% Theorem-like structures
\newtheorem{theorem}{Theorem}[section]

\newtheorem{proposition}[theorem]{Proposition}

\theoremstyle{definition}
\newtheorem{definition}[theorem]{Definition}

\newtheorem{assumption}[theorem]{Assumption} % added myself

\theoremstyle{remark}
\newtheorem{remark}[theorem]{Remark}

%% end of mypackages

\setlength{\nomitemsep}{-\parsep} % no separation in nomenclature

%%%%%%%%%%%%%%%%%%%%%%%%%%%%%%%%%%%%%%%%%%%%%%%%%%

\usepackage{epstopdf}% To incorporate .eps illustrations using PDFLaTeX, etc.
\usepackage[caption=false]{subfig}% Support for small, `sub' figures and tables

\usepackage[numbers,sort&compress]{natbib}% Citation support using natbib.sty
\bibpunct[, ]{[}{]}{,}{n}{,}{,}% Citation support using natbib.sty
% Bibliography support using natbib.sty

% \theoremstyle{plain}% Theorem-like structures provided by amsthm.sty
% \newtheorem{theorem}{Theorem}[section]
% \newtheorem{lemma}[theorem]{Lemma}
% \newtheorem{corollary}[theorem]{Corollary}
% \newtheorem{proposition}[theorem]{Proposition}

% \theoremstyle{definition}
% \newtheorem{definition}[theorem]{Definition}
% \newtheorem{example}[theorem]{Example}

% \theoremstyle{remark}
% \newtheorem{remark}{Remark}
% \newtheorem{notation}{Notation}

\begin{document}

%\articletype{ARTICLE TEMPLATE}% Specify the article type or omit as appropriate

% \title{Taylor \& Francis \LaTeX\ template for authors (\textsf{Interact} layout + reference style~Q)}

% \author{
% \name{A.~N. Author\textsuperscript{a}\thanks{CONTACT A.~N. Author. Email: latex.helpdesk@tandf.co.uk} and John Smith\textsuperscript{b}}
% \affil{\textsuperscript{a}Taylor \& Francis, 4 Park Square, Milton Park, Abingdon, UK; \textsuperscript{b}Institut f\"{u}r Informatik, Albert-Ludwigs-Universit\"{a}t, Freiburg, Germany}
% }

\title{An energy-based analysis of reduced-order models of (networked) synchronous machines% :\\ an energy-based perspective
}

\author{
\name{T.W. Stegink\textsuperscript{a}$^{\dagger}$\thanks{$^\dagger$Email of the corresponding authors: \{t.w.stegink, c.de.persis, a.j.van.der.schaft\}@rug.nl}
and C. {D}e Persis\textsuperscript{a}$^\dagger$ and A.J. van der Schaft\textsuperscript{b}$^\dagger$}
\affil{\textsuperscript{a}Engineering and Technology institute Groningen, University of Groningen, \\ Nijenborgh 4, 9747 AG Groningen, the Netherlands;
\\\textsuperscript{b}Johann Bernoulli Institute for Mathematics and Computer Science, University of Groningen, Nijenborgh 9, 9747 AG Groningen,
the Netherlands.}
%\received{Submitted on \today}
}

\maketitle

% \begin{abstract}
% This template is for authors who are preparing a manuscript for a Taylor \& Francis journal using the \LaTeX\ document preparation system and the \texttt{interact} class file, which is available via selected journals' home pages on the Taylor \& Francis website.
% \end{abstract}

% \begin{keywords}
% Sections; lists; figures; tables; mathematics; fonts; references; appendices
% \end{keywords}

\begin{abstract}
% MICROGRID MODELING MODIFIED
 Stability of power networks is an increasingly important topic because of the high penetration of renewable distributed generation units. This requires the development of advanced (typically model-based) techniques for the analysis and controller design of power networks. % The analysis and controller design of power networks requires the development of advanced (typically model-based) techniques naturally posing an interesting challenge to the control community.  
% MORE DETAILS ARE PRESENTED ABOUT THE MODEL REDUCTION PROCEDURE
Although there are widely accepted reduced-order models to describe the dynamic behavior of power networks, they are commonly presented without details about the reduction procedure, hampering the understanding of the physical phenomena behind them. The present paper aims to provide a modular model derivation of multi-machine power networks. Starting from first-principle fundamental physics, we present detailed dynamical models of synchronous machines and clearly state the underlying assumptions which lead to some of the standard reduced-order multi-machine models, including the classical second-order swing equations.
% ENERGY FUNCTIONS, PH FORM
In addition, the energy functions for the reduced-order multi-machine models are derived, which allows % . For purely inductive networks, these energy functions are used
to represent the multi-machine systems as \pH\ systems. Moreover, the systems are proven to be passive with respect to its steady states, which permits for a power-preserving interconnection with other passive components, including passive controllers. As a result, the corresponding energy function or Hamiltonian can be used to provide a rigorous stability analysis of advanced  models for the power network without having to linearize % relying on the linearization of
the system. 

\end{abstract}

\begin{keywords}
power networks; model reduction; synchronous machines; energy functions; \pH
\end{keywords}

% \begin{classcode} 34D20; 93A15; 93A30; 93D30, 37J25  \end{classcode} 93D05

\section{Introduction}
\label{sec:introduction}

%\todoinb{hoi}
%\todo[inline,color=blue!30]{Use SG or SM everywhere?}

\subsection{Problem statement/motivation}

% STABILITY CHALLENGE 
The control and stability of power networks has become increasingly challenging over the last decades. As renewable energy sources penetrate the grid, the conventional power plants have more difficulty in keeping the frequency around the nominal value, e.g. 50 Hz, leading to an increased chance of network failures or, in the worst case, even blackouts.  

% MORE ADVANCED MODELS REQUIRED
The current developments require a sophisticated stability analysis of more advanced models for the power network as the grid is operating more frequently near its capacity constraints. For example, using high-order models of synchronous machines that better approximate the actual system allows us to establish results on the stability of power networks that are more reliable and accurate. 

% NOVEL APPROACH REQUIRED 
However, in much of the recent literature, a rigorous stability analysis has been carried out only for low-order models of the power network which have a limited accuracy. For models of intermediate complexity the stability analysis has merely been done for the linearized system. Hence, a novel approach is required to make a profound stability analysis of these more complicated models possible. 

% ENERGY-BASED APPROACH
In this paper, we propose a unifying energy-based approach for the modelling and analysis of multi-machine power networks which is based on the theory of \pH\ systems. Since energy is the main quantity of interest, the \pH\ framework is a natural approach to deal with the problem \cite{phsurvey}. Moreover, it lends itself to deal with large-scale nonlinear multi-physics systems like power networks \cite{EJCFiaz,stegink2015port,LHMNLC,stegink2017unifying}. 

\subsection{Literature review}

% EMPHASIS ON SM's
The emphasis in the present paper lies on the modelling and analysis of (networked) synchronous machines since they have a crucial role in the stability of power networks as they are the most flexible and have to compensate for the increased fluctuation of both the supply and demand of power. 
% ADVANCED MODEL
An advanced model of the synchronous machine is the first-principle model which is derived in many power-engineering books \cite{anderson1977,powsysdynwiley,kundur}, see in particular  \cite[Chapter 11]{powsysdynwiley} for a detailed derivation of the  model. 

% FIAZ
Modelling the first-principle synchronous (multi-)machine model using the theory of \pH\ systems has been done previously in \cite{EJCFiaz}. However, in this work, stabilization of the synchronous machine to the synchronous frequency could not be proven. %The main issue is that the system is not proven to be passive with respect to its steady states, which is a desirable property for control. 
% CALISKAN TABUADA
In \cite{cal-tab}  a similar model for the synchronous machine is used, but with the damper windings neglected. Under some additional assumptions, asymptotic stability of a single machine is proven using a shifted energy function. For multi-machine systems, however, stability could not be proven using a similar approach.  

% FULL-ORDER MODEL TOO COMPLEX
Summarizing, the complexity of the full-order model of the synchronous machine makes a rigorous stability analysis troublesome, especially when considering multi-machine networks, see also \cite{ortega2005transient}. Moreover, it is often not necessary to consider the full-order model when studying a particular aspect of the electromechanical dynamics (e.g. operation around the synchronous frequency) \cite{powsysdynwiley}.

% SWING EQUATIONS
On the other side of the spectrum, much of the literature using Lyapunov stability techniques rely on the second-order (non)linear swing equations as the model for the power network \cite{AGC_ACC2014,you2014reverse,zhangpapaautomatica,zhao2015distributedAC,pai1989energy,fouad1981transient,michel1983power,powsysdynwiley} or the third-order model as e.g. in \cite{trip2016internal}. For microgrids similar models are considered in which a Lyapunov stability analysis is carried out \cite{de2018bregman,de2016lyapunov}.  However, the models are often presented without stating the details on the model reduction procedure or the validity of the model. For example, the swing equations are inaccurate and only valid on a specific time scale up to the order of a few seconds so that asymptotic stability results have a limited value % are invalid
for the actual system  \cite{caliskan2015uses,anderson1977,kundur,powsysdynwiley}. 

% MODEL REDUCTION
Hence, it is appropriate to make simplifying assumptions for the full-order model and to focus on multi-machine models with intermediate complexity which provide a more accurate description of the network compared to the second- and third-order models \cite{anderson1977,kundur,powsysdynwiley}. In doing so, we explain how these intermediate-order models are obtained from the first-principle model and what the underlying assumption are. 
% MACHOWSKI
Here we follow the lines of \cite{powsysdynwiley}, where a detailed derivation of the reduced-order models is given.

\subsection{Contributions}
% LINEARIZATION INTERMEDIATE ORDER MODELS
In the present literature the stability analysis of intermediate-order multi-machine models is only carried out for the linearized system  \cite{alv_meng_power_coupl_market,kundur,powsysdynwiley,anderson1977}. Consequently, the stability results are only valid around a specific operating point.
% ALVARADO COUPLED MARKETS
In particular, in \cite{alv_meng_power_coupl_market} a fourth-order model for the synchronous machine is considered which is coupled with market dynamics and the stability is analyzed by examining the eigenvalues of the linearized system. 

% NONLINEAR SYSTEM PRESERVED 
Our approach is different as the nonlinear nature of the power network is preserved. In particular, in this paper we consider, among other things, a nonlinear sixth-order reduced model of the synchronous machine that enables a quite accurate description of the power network while still allowing us to perform a rigorous stability analysis. % using a suitable Lyapunov function. 

% CDC PAPER (HIGH-DIMENSIONAL)
In fact, in our previous work \cite{stegink2016optimal} we analyzed the sixth-order multi-machine model and we applied an optimal power dispatch controller and showed convergence using a suitable energy function. In the present work we will show that this energy function indeed corresponds to the electrical energy stored in the generator circuits and the transmission lines.

% ENERGY-BASED STABILITY ANALYSIS
In addition, this paper  establishes a unifying energy-based analysis of intermediate-order models of (networked) synchronous machines for inductive networks. % where the stator windings are in quasi steady state
To this end, we provide a systematic way in obtaining the energy function of each reduced-order multi-machine model. Furthermore, it is shown that (a shifted variant of) these energy functions act as candidate Lyapunov functions for the stability analysis of power networks. 

% % PAI
% In \cite{pai1989energy} also an energy-based analysis is given. However, in the present paper a more systematic way in deriving the energy functions is presented. 

% PH FRAMEWORK, SHIFTED PASSIVITY
In this respect, we show that the \pH\ framework is very convenient for representing the dynamics of the reduced-order multi-machine models and for the stability analysis. In particular it is shown that, using the physical energy stored in the synchronous machines and the transmission lines as the Hamiltonian, a \pH\ representation of the multi-machine power network is obtained. More specifically, while the system dynamics is complex, the interconnection and damping structure of the corresponding \pH\ system is sparse and, importantly, state-independent. The latter property implies shifted passivity of the intermediate-order models with respect to their steady states. This is property proves to be very convenient for control purposes \cite{phsurvey,stegink2017unifying,trip2016internal,stegink2016optimal}.

\subsection{Outline}

% OUTLINE
The remainder of the paper is structured as follows. First we state the preliminaries in Section \ref{sec:preliminaries}. Then in Section \ref{sec:full-order-model} the full-order first-principle model is presented and its \pH\ form is given. The model reduction procedure is discussed in Section \ref{sec:model-reduct-synchr} in which models of intermediate order are obtained. In Section \ref{sec:multi-machine-models} these models are used to establish multi-machine models, including the classical second-order model. Then in Section \ref{sec:energy-analysis} energy functions of the reduced order models are derived, which in Section \ref{sec:port-hamilt-repr} are used to put the multi-machine models in \pH\ form. Finally, Section \ref{sec:conclusions} discusses the conclusions and possible directions for future research. 

%%%%%%%%%%%%%%%%%%%%%%%%%%%%%%%%%%%%%%%%%%%%%%%%%%%%%%%%%%%%%%%%%%%

\section{Preliminaries}
\label{sec:preliminaries}
\subsection{Notation}
\label{sec:notation}
The set of real numbers and the set of complex numbers are respectively defined by $\mathbb R,\mathbb C$. Given a complex number $\alpha\in\mathbb C$, the real and imaginary part of are denoted by $\Re(\alpha),\Im(\alpha)$ respectively.  The imaginary unit is denoted by $j=\sqrt{-1}$. Let $\{v_1,v_2,\ldots,v_n\}$ be a set of real numbers, then $\diag(v_1,v_2,\ldots,v_n)$ denotes the $n\times n$ diagonal matrix with the entries $v_1,v_2,\ldots,v_n$ on the diagonal and likewise $\col(v_1,v_2,\ldots,v_n)$ denotes the column vector with the entries $v_1,v_2,\ldots,v_n$. Let $f:\mathbb R^n\to\mathbb R$ be a twice differentiable function, then $\nabla f(x)$ denotes the gradient of $f$ evaluated at $x$ and $\nabla^2 f(x)$ denotes the Hessian of $f$ evaluated at $x$. Given a symmetric matrix $A\in\mathbb R^{n\times n}$, we write $A>0 \ (A\ge0)$ to indicate that $A$ is a positive (semi-)definite matrix.

\subsubsection{Power network}
Consider a power grid consisting of $n$ buses. The network is represented by a connected and undirected graph $ \G = (\V, \E) $, where the set of nodes, $ \V = \{1,\ldots, n\}$, is the set of buses representing the synchronous machines and the set of edges, $ \E \subset \V \times \V  $, is the set of transmission lines connecting the buses where each edge $(i,k)=(k,i)\in \E$ is an unordered pair of two vertices $i,k\in\V$. Given a node $i$, then the set of neigboring nodes is denoted by $\mathcal N_i:=\{k \ | \ (i,k)\in\E\}$. 

% --------------------------------------------------

% NOMENCLATURE
\printnomenclature %[2cm]

\nomenclature{$p$}{Angular momentum.}
\nomenclature{$\w_s$}{Synchronous frequency (e.g. 50Hz).}
\nomenclature{$\w$}{Rotor frequency.}
\nomenclature{$\Delta\w$}{Rotor frequency with respect to the synchronous rotating reference frame.}
\nomenclature{$T_{dq0}$}{The $dq0$-transformation or Park transformation.}
\nomenclature{$E_{q}'$}{$q$-axis component of the transient internal emf.}
\nomenclature{$E_{d}'$}{$d$-axis component of the transient internal emf.}
\nomenclature{$E_{q}''$}{$q$-axis component of the subtransient internal emf.}
\nomenclature{$E_{d}''$}{$d$-axis component of the subtransient internal emf.}
\nomenclature{$V_d$}{$d$-axis component of the external emf (of the synchronous machine).}
\nomenclature{$V_q$}{$q$-axis component of the external emf (of the synchronous machine).}
\nomenclature{$B_{ii}$}{Self-susceptance at node $i$.}
\nomenclature{$B_{ij}$}{Line susceptance between node  $i$ and $j$.}
\nomenclature{$X_{di}$}{$d$-axis synchronous reactance.}
\nomenclature{$X_{qi}$}{$q$-axis synchronous reactance.}
\nomenclature{$X_{di}'$}{$d$-axis  transient reactance.}
\nomenclature{$X_{qi}'$}{$q$-axis  transient reactance.}
\nomenclature{$X_{di}''$}{$d$-axis subtransient reactance.}
\nomenclature{$X_{qi}''$}{$q$-axis subtransient reactance.}
\nomenclature{$T_{doi}'$}{$d$-axis open-circuit transient time constant.}
\nomenclature{$T_{qoi}'$}{$q$-axis open-circuit transient time constant.}
\nomenclature{$T_{doi}''$}{$d$-axis open-circuit subtransient time constant.}
\nomenclature{$T_{qoi}''$}{$q$-axis open-circuit subtransient time constant.}
\nomenclature{$\Psi_d$}{$d$-axis stator winding flux linkage.}
\nomenclature{$\Psi_q$}{$q$-axis stator winding flux linkage.}
\nomenclature{$\Psi_0$}{$0$-axis stator winding flux linkage.}
\nomenclature{$\Psi_f$}{Field winding flux linkage.}
\nomenclature{$\Psi_g$}{Additional $q$-axis damper winding flux linkage.}
\nomenclature{$\Psi_D$}{$d$-axis damper winding flux linkage.}
\nomenclature{$\Psi_Q$}{$q$-axis damper winding flux linkage.}
\nomenclature{$d$}{Mechanical damping constant.}
\nomenclature{$D$}{Asynchronous damping constant.}
\nomenclature{$I_d$}{$d$-axis stator winding current.}
\nomenclature{$I_q$}{$q$-axis stator winding current.}
\nomenclature{$I_0$}{$0$-axis stator winding current.}
\nomenclature{$I_f$}{Field winding current.}
\nomenclature{$I_g$}{Additional $q$-axis damper winding current.}
\nomenclature{$I_D$}{$d$-axis damper winding current.}
\nomenclature{$I_Q$}{$q$-axis damper winding current.}
\nomenclature{$R$}{Stator winding resistance.}
\nomenclature{$\mathcal{B}$}{Incidence matrix of the network.}
\nomenclature{$\delta$}{Rotor angle with respect to the synchronous rotating reference frame.}
\nomenclature{$\gamma$}{Rotor angle of the synchronous machine.}
\nomenclature{$\theta$}{Voltage angle with respect to the synchronous rotating reference frame.}
\nomenclature{$\alpha$}{Voltage angle with respect to $dq0$-reference frame of the rotor.}

%--------------------------------------------------

\subsection{The $dq0$-transformation}
An important coordinate transformation used in the literature on power systems is the $dq0$-transformation \cite{powsysdynwiley,EJCFiaz} or \emph{Park transformation} \cite{park1929two} which is defined by  
\begin{align}\label{eq:dq0tran}
  T_{dq0}(\gamma)&=\sqrt{\frac32}
\begin{bmatrix}
\cos (\gamma)  & \cos (\gamma-\frac{2\pi}{3}) & \cos (\gamma+\frac{2\pi}{3}) \\
\sin (\gamma)  & \sin (\gamma-\frac{2\pi}{3}) & \sin (\gamma+\frac{2\pi}{3}) \\
\frac1{\sqrt{2}} & \frac1{\sqrt{2}}               & \frac1{\sqrt{2}}
\end{bmatrix}.
\end{align}
% ADVANTAGES dq0-TRANFORMATION
Observe that the mapping \eqref{eq:dq0tran} is orthogonal, i.e., $T_{dq0}^{-1}(\gamma)= T_{dq0}^T(\gamma)$. The $dq0$-transformation offers various advantages when analyzing power system dynamics and is therefore widely used in applications. In particular, the $dq0$-transformation maps \emph{symmetric} or \emph{balanced} three-phase AC signals (see \cite[Section 2]{schiffer2016survey} for the definition) to constant signals. % Consequently, it transforms periodic orbits into equilibria.
This significantly simplifies the modelling and analysis of power systems, which is the main reason why the transformation \eqref{eq:dq0tran} is used in the present case. In addition, the transformation \eqref{eq:dq0tran} exploits the fact that, in a power system operated under symmetric conditions, a three-phase signal can be represented by two quantities \cite{schiffer2016survey}.  
% % VARIANTS OF THE dq0-TRANSFORMATION
% \begin{remark}
%   There are several variations to the mapping \eqref{eq:dq0tran} available in the literature. They may differ from the mapping \eqref{eq:dq0tran} in the ordering of the rows and the sign of the entries in the second row of the matrix given in \eqref{eq:dq0tran}. However, all representations are equivalent in the sense that they can all be represented by $T_{dq0}$ as given in \eqref{eq:dq0tran} by choosing an appropriate angle and, possibly, rearranging the row order of the matrix $T_{dq0}$.  Since, with a slightly different scaling factor, this transformation was first introduced by Robert H. Park in 1929 \cite{park1929two}, it is also often called the \emph{Park transformation}.
% \end{remark}

% MAPPING BETWEEN ABC <-> dq0 COORDINATES
For example, for a synchronous machine with  AC voltage $V^{\text{ABC}}=\col(V^A,V^B,V^C)$ in the static ABC-reference frame, see Figure \ref{fig:SG}, the $dq0$-transformation is used to map this AC voltage to the (local) $dq0$-coordinates as $V^{dq0}=\col(V_d,V_q,V_0)=T_{dq0}(\gamma)V^{\text{ABC}}$. Note that the local $dq0$-reference is aligned with the rotor of the machine which has angle $\gamma$ with respect to the static ABC-reference frame, see again Figure \ref{fig:SG}. In case more that one synchronous machine is considered, then the voltage $V^{dq0^k}$ in local $dq0$-coordinates of machine $k$ can be expressed in the local $dq0$-coordinates of machine $i$ as
\begin{align}
V^{dq0^i}=T_{dq0}(\gamma_i)V^{ABC^i}=T_{dq0}(\gamma_i)V^{ABC^k}=T_{dq0}(\gamma_i)T_{dq0}(\gamma_k)^TV^{dq0^k}.\label{eq:dq0mappingTdq0}
\end{align}
An analogous expression can be obtained for relation between the currents $I^{dq0^i},$ and $I^{dq0^k}$. Here we can verify that
\begin{align*}
  T_{dq0}(\gamma_i)T_{dq0}(\gamma_k)^T=
  \begin{bmatrix}
    \cos \gamma_{ik} & -\sin \gamma_{ik}&0\\
    \sin \gamma_{ik} & \cos \gamma_{ik}&0\\
    0&0&1
  \end{bmatrix}
\end{align*}
where $\gamma_{ik}:=\gamma_i-\gamma_k$ represents the rotor angle difference between  synchronous machines $i$ and $k$ respectively. 

\subsection{Phasor notation}
When considering operation around the synchronous frequency, the voltages and currents can be represented as phasors in the $dq$-coordinates rotating at the synchronous frequency. We use the following notation for the phasor\footnote{This is in contrast to \cite{kundur,sauerpai1998powersystem} where the convention $\overline V=V_d+jV_q$ is used.} \cite{powsysdynwiley}:
% for the voltages and the currents. 
\begin{alignat*}{3}
  \overline V&=\sqrt{V_q^2+V_d^2}\exp\Big({j\arctan{(\frac{V_d}{V_q})}}\Big)&&=\overline V_q+\overline V_d&&=V_q+jV_d,\\ \overline I&=\sqrt{I_q^2+I_d^2}\exp\Big({j\arctan{(\frac{I_d}{I_q})}}\Big)&&=\overline I_q+\overline I_d&&=I_q+jI_d,
\end{alignat*}
which is commonly used in the power system literature \cite{powsysdynwiley,schiffer2016survey}. Here the bar-notation is used to represent the complex phasor and we define $\overline V_q=V_q, \overline V_d=jV_d$ and a likewise $\overline I_q=I_q, \overline I_d=jI_d$ for the currents.  In this case, the mapping between the voltages (and current) from one $dq$-reference frame to another is given by 
\begin{align}\label{eq:phasor-trans}
  \begin{aligned}
  \overline V^{dq^i}&=e^{-j\gamma_{ik}}\overline V^{dq^k}=(\cos \gamma_{ik}-j \sin \gamma_{ik})(V_{q}^{dq^k}+jV_{d}^{dq^k})\\
&=V_{q}^{dq^k}\cos \gamma_{ik}+V_{d}^{dq^k}\sin \gamma_{ik}+j(V_{d}^{dq^k}\cos\gamma_{ik}-V_{q}^{dq^k}\sin \gamma_{ik}).
\end{aligned}
\end{align}
By equating the real and imaginary parts, this exactly corresponds to the transformation \eqref{eq:dq0mappingTdq0} as expected.   

\section{Full-order model of the synchronous machine}
\label{sec:full-order-model}

\begin{figure}
  \centering
  \includegraphics[width=0.4\linewidth]{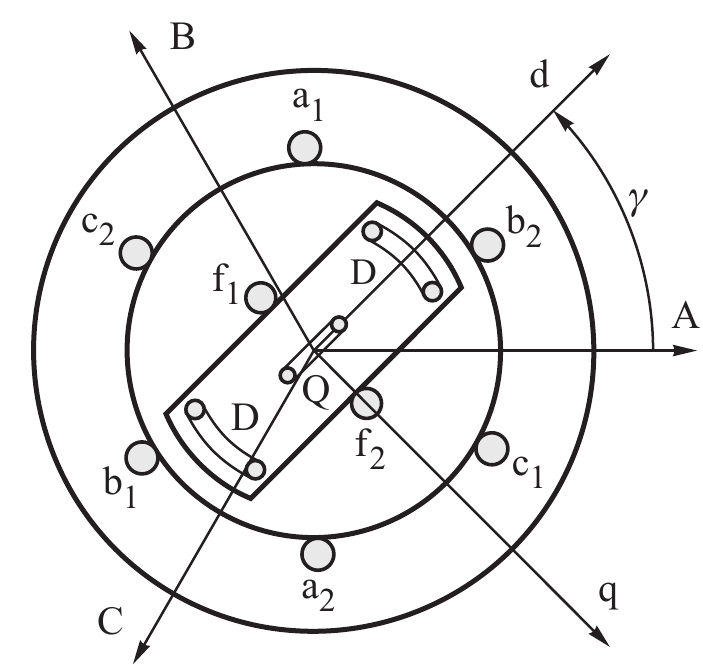}
  \caption{Schematic illustration of a  (salient-pole) synchronous machine \cite{powsysdynwiley}. }
  \label{fig:SG}
\end{figure}
A synchronous machine is a multi-physics system characterized by both mechanical and electrical variables, i.e., an electromechanical system. Derived from physical first-principle laws, the dynamics can be described in terms of certain specific physical quantities such as the magnetic flux, voltages, angles, momenta and torques. The complete model can be described by a system of ordinary differential equations (ODE's)
where the flux-current relations are represented by algebraic constraints. The generator rotor circuit is formed by a field circuit and three amortisseur circuits, which is divided in one $d$-axis circuit and two $q$-axis circuits.  The stator is formed by 3-phase windings which are spatially distributed in order to generate 3-phase voltages at machine terminals. For convenience magnetic saturation effects are neglected in the model of the synchronous machine. After applying the $dq0$-transformation $T_{dq0}(\gamma)$ on the ABC-variables with respect to the rotor angle $\gamma$, its dynamics in the $dq0$-reference frame is governed by the following 9th-order system of differential equations \cite{EJCFiaz,kundur,powsysdynwiley}\footnote{See in particular \cite[Chapter 11]{powsysdynwiley} for a detailed derivation of the model \eqref{eq:Mhfullordermodel}.}: 
\begin{subequations}\label{eq:Mhfullordermodel}
\begin{align}
\dot \Psi_d  &= -RI_d-\Psi_q\w -V_d\label{eq:dpsid}\\          
\dot \Psi_q  &= -RI_q+\Psi_d\w -V_q\label{eq:dpsiq}\\          
\dot \Psi_0  &= -RI_0-V_0\label{eq:dpsi0}\\                    
\dot \Psi_f  &= -R_fI_f+V_f\label{eq:dpsif}\\                  
\dot \Psi_g  &= -R_gI_g\label{eq:dphig}\\                      
\dot \Psi_{D}&= -R_{D}I_{D}\\                  
\dot \Psi_{Q}&= -R_{Q}I_{Q}\\                  
\dot \gamma &= \w\label{eq:ddelta}                       \\
 J\dot \w   &= \Psi_qI_d-\Psi_dI_q-d\w+\tau.\label{eq:dwfull}
\end{align}
\end{subequations}
Here $V_{d},V_q,V_0$ are instantaneous external voltages, $\tau$ is the external mechanical torque and $V_f$ is the excitation voltage. The rotor angle $\gamma$, governed by \eqref{eq:ddelta}, is taken with respect to the static \text{ABC}-reference frame, see also Figure \ref{fig:SG}. The quantities $\Psi_{d},\Psi_q,\Psi_0$ are stator winding flux linkages and  $\Psi_{f},\Psi_g,\Psi_D,\Psi_Q$ are the rotor flux linkages respectively and are related to the currents as \cite{powsysdynwiley}
\begin{align}
  \begin{bmatrix}
    \Psi_d\\
    \Psi_f\\
    \Psi_{D}
  \end{bmatrix}&=
\overbrace{  \begin{bmatrix}
    L_d&\kappa M_f&\kappa M_D\\
    \kappa M_f&L_{f}&L_{fD}\\
    \kappa M_D&L_{fD}&L_{D}
  \end{bmatrix}
}^{\mathcal L_d}\begin{bmatrix}
   I_d  \\
    I_f \\
     I_{D}
\end{bmatrix}\label{eq:Md-ind}\\
    \begin{bmatrix}
    \Psi_q\\
    \Psi_g\\
    \Psi_{Q}
  \end{bmatrix}&=
\underbrace{  \begin{bmatrix}
    L_q&\kappa M_g&\kappa M_{Q}\\
    \kappa M_g&L_g&L_{gQ}\\
    \kappa M_{Q}&L_{gQ}&L_{Q}
  \end{bmatrix}}_{\mathclap{\mathcal L_q}}
\begin{bmatrix}
   I_q  \\
   I_g\\
    I_{Q}
\end{bmatrix}\label{eq:Mhq-ind}\\[-1ex]
\Psi_0&=L_0I_0, \label{eq:M0-ind}
\end{align}
where $\kappa=\sqrt{\frac32}$, see also the nomenclature in Section \ref{sec:notation}. %For more details on how the synchronous machine model in \text{ABC}-coordinates is mapped towards this ninth-order model \eqref{eq:Mhfullordermodel} we refer to \cite{powsysdynwiley}.
Note that in the $dq0$-coordinates, the inductor equations can be split up in each of the three axes, resulting into the three completely independent equations \eqref{eq:Md-ind}-\eqref{eq:M0-ind}.  % In addition, it can be split up in each of the $dq0$-axes, resulting into the three completely independent equations \eqref{eq:Md-ind}-\eqref{eq:M0-ind}.
For a physically relevant model, the inductance matrices $\mathcal L_d,\mathcal L_q\in\mathbb R^{3\times 3}$ are assumed to be positive definite. An immediate observation from \eqref{eq:dpsi0} and \eqref{eq:M0-ind} is that the dynamics associated to the 0-axis is fully decoupled from the rest of the system. Therefore, without loss of generality, we omit this differential equation in the sequel and focus solely on the dynamics in the $d$- and $q$-axes.
\begin{remark}[Additional damper winding]
Many generators, and in particular turbogenerators, have a solid-steel rotor body which acts as a screen in the $q$-axis \cite{powsysdynwiley}. It is convenient to represent this by the additional winding in the $q$-axis represented by the symbol $g$, see \eqref{eq:dphig}. However, for salient-pole synchronous generators, this winding is absent. For completeness, both cases are considered in this paper.
\end{remark}

\subsection{Port-Hamiltonian representation}
\label{sec:port-hamilt-repr-1}
Inspired by the work \cite{EJCFiaz}, it can be shown that full-order model \eqref{eq:Mhfullordermodel} admits a \pH\ representation, see \cite{phsurvey} for a survey. More specifically, by defining the state vector $x=(\Psi_d,\Psi_q,\Psi_f,\Psi_g,\Psi_D,\Psi_Q,\gamma,p), p=J\w$, the $dq$-dynamics of a single synchronous machine can be written in \pH\ form as 
\begin{equation}\label{eq:pHformfull}
\begin{aligned}
\begin{bmatrix}
\dot \Psi_d\\
\dot \Psi_q\\
\dot \Psi_f\\
\dot \Psi_g\\
\dot \Psi_{D}\\
\dot \Psi_{Q}\\
\dot \gamma\\
\dot{ p}
\end{bmatrix}&=
\begin{bmatrix}
  -R     & 0        & 0    & 0    & 0    & 0     & 0 & -\Psi_q \\
  0      & -R       & 0    & 0    & 0    & 0     & 0 & \Psi_d  \\
  0      & 0        & -R_f & 0    & 0    & 0     & 0 & 0       \\
  0      & 0        & 0    & -R_g & 0    & 0     & 0 & 0       \\
  0      & 0        & 0    & 0    & -R_D & 0     & 0 & 0       \\
  0      & 0        & 0    & 0    & 0    & -R_Q  & 0 & 0       \\
  0      & 0        & 0    & 0    & 0    & 0     & 0 & 1       \\
  \Psi_q & -\Psi_d  & 0    & 0    & 0    & 0     & -1& -d      
\end{bmatrix}\nabla H(x)+Gu\\
y&=G^T\nabla H(x)=
\begin{bmatrix}
  I_d\\I_q\\I_f\\\w
\end{bmatrix},\quad  G^T=
\begin{bmatrix}
  1&0&0&0&0&0&0&0\\
  0&1&0&0&0&0&0&0\\
  0&0&1&0&0&0&0&0\\
  0&0&0&0&0&0&0&1
\end{bmatrix}, \quad u=
\begin{bmatrix}
  V_d\\V_q\\V_f\\\tau
\end{bmatrix}.
\end{aligned}
\end{equation}
where the Hamiltonian is given by the sum of the electrical and mechanical energy:
\begin{align*}
H(x)= H_d(x)&+H_q(x)+H_m(x)= \F\begin{bmatrix}
    \Psi_d\\
    \Psi_f\\
    \Psi_{D}
  \end{bmatrix}^T
  \begin{bmatrix}
    L_d&\kappa M_f&\kappa M_D\\
    \kappa M_f&L_{f}&L_{fD}\\
    \kappa M_D&L_{fD}&L_{D}
  \end{bmatrix}^{-1}
  \begin{bmatrix}
    \Psi_d\\
    \Psi_f\\
    \Psi_{D}
  \end{bmatrix}\\&+    \F\begin{bmatrix}
    \Psi_q\\
    \Psi_g\\
    \Psi_{Q}
  \end{bmatrix}^T
  \begin{bmatrix}
    L_q&\kappa M_g&\kappa M_{Q}\\
    \kappa M_g&L_g&L_{gQ}    \\
    \kappa M_{Q}&L_{gQ}&L_{Q}
  \end{bmatrix}^{-1}
\begin{bmatrix}
    \Psi_q\\
    \Psi_g\\
    \Psi_{Q}
  \end{bmatrix}+ \F J^{-1} p^2.
\end{align*}
Here the power-pairs $(V_d,I_d),(V_q,I_q)$ correspond to the external electrical power supplied by the generator. In addition, the power-pair $(V_f,I_f)$ corresponds to the power supplied by the exciter to the synchronous machine. Finally, the pair $(\tau,\w)$ is associated with the mechanical power injected into the synchronous machine. 
% passive 
As noted from the \pH\ structure of the system \eqref{eq:pHformfull}, it naturally follows that the system is \emph{passive} with respect to the previously mentioned input/output pairs, i.e.,
\begin{align*}
  \dot H\leq V_dI_d+V_qI_q+V_fI_f+\tau\w.
\end{align*}
A crucial observation is that the interconnection %Poisson
 structure of the \pH\ system \eqref{eq:pHformfull} depends on the state $x$. This property significantly increases the complexity of a Lyapunov based stability analysis % hampers the use of a Lyapunov method based stability analysis
 of equilibria that are different from the origin, see  \cite{arjansteginkPerspectiveAnn2016,EJCFiaz,cal-tab,maschke2000energy} for more details on this challenge. % In particular, passivity with respect to a steady state could not be proven there, implying that the \emph{shifted Hamiltonian} cannot directly be used as a candidate Lyapunov function \cite{phsurvey,junco2011bond}.}
% however in cal-tab it is used as a Lyapunov function

\section{Model reduction of the synchronous machine}
\label{sec:model-reduct-synchr}

To simplify the analysis of (networked) synchronous machines, it is preferable to consider reduced-order models with decreasing complexity \cite{kundur,sauerpai1998powersystem,powsysdynwiley}. In this section we, following the exposition of \cite{powsysdynwiley}, discuss briefly how several well-known lower order models are obtained from the first-principle model \eqref{eq:Mhfullordermodel}. In each reduction step the underlying assumptions and validity of the reduced-order model is discussed. % decreasing accuracy

% SINGULAR PERTURBATION 
The main assumptions rely on time-scale separation implying that singular perturbation techniques can be used to obtain reduced-order models \cite{ahmed1982reduced}. In particular, in the initial reduction step, this allows the stator windings of the synchronous machine to be considered in quasi steady state. In \cite{kokotovic1980singular} this quasi steady state assumption is validated by the use of iterative time-scale separation.
% sixth-order model
In doing so, it is assumed that % The first main assumption in the model reduction procedure is that
the frequency is around the synchronous frequency\footnote{For example, in Europe the synchronous frequency is \SI{50}{Hz} and in the United States it is \SI{60}{\hertz}.} $\w_s$ and that  $\dot \Psi_d,\dot \Psi_q$ are assumed to be small \cite{powsysdynwiley}. 
\begin{assumption}[Operation around $\w\approx \w_s$]\label{ass:ssemfs}
  The synchronous machine is operating around synchronous frequency ($\w\approx \w_s$) and in addition  $\dot \Psi_d$ and $\dot \Psi_q$ are small compared to $-\w\Psi_q$ and $\w\Psi_d$ which implies
  \begin{align}\label{eq:ssemfs}
    \begin{bmatrix}
      V_d\\
      V_q
    \end{bmatrix}\approx
    -
    \begin{bmatrix}
      R&0\\
      0&R
    \end{bmatrix}
         \begin{bmatrix}
           I_d\\
           I_q
         \end{bmatrix}
    +\w_s
    \begin{bmatrix}
      -\Psi_q\\
      \Psi_d
    \end{bmatrix}.
  \end{align}
%see equation (11.32) of \cite{powsysdynwiley}. 
\end{assumption}
\begin{remark}[Singular perturbation process]
  It is known that during transients $\Psi_d,\Psi_q$ oscillate with high frequency equal to $\w\approx \w_s$ such that $\dot \Psi_d,\dot \Psi_q$ become very large. The validation of the contradicting Assumption \ref{ass:ssemfs} is part of a singular perturbation process where the slow variables are approximated by taking the % quasi steady states or
  averaging effect of the fast oscillatory variables \cite{ahmed1982reduced,kokotovic1980singular}. 
\end{remark}
% desirable to reduce the order of the model
By Assumption \ref{ass:ssemfs}, the two differential equations \eqref{eq:dpsid}, \eqref{eq:dpsiq} corresponding to $\Psi_d,\Psi_q$ are replaced by algebraic equations \eqref{eq:ssemfs}, so that a system of  differential-algebraic equations (DAE's) is obtained \cite{powsysdynwiley}. For many power system studies it is desirable to rephrase and simplify the model \eqref{eq:dpsif}-\eqref{eq:ddelta} together with the algebraic equations \eqref{eq:ssemfs} so that they are in a more acceptable form and easier to interface to the power system network equations. In the following sections, under some additional assumptions based on time-scale separation, we eliminate the two algebraic constraints obtained by putting an equality in \eqref{eq:ssemfs}. % As a result, a sixth-order system of ordinary differential equations is obtained which describes the synchronous generator dynamics. 
 Before examining how % these changes can be made 
this is done, it is necessary to relate the circuit equations to the flux conditions inside the synchronous machine when it is in the steady state, transient state or the subtransient state.
   
\subsection{Distinction of operation states}
Following the established literature on power systems \cite{kundur,powsysdynwiley,sauerpai1998powersystem,anderson1977}, a distinction between 3 different operation states of the synchronous machine is made. Each  of the 3 characteristic operation states correspond to different stages of rotor screening and a different time-scale \cite{ahmed1982reduced,kokotovic1980singular}, see Figure \ref{fig:operationstates}. %We briefly discuss each operation state in the following.
\begin{figure}
  \centering \includegraphics[width=\linewidth]{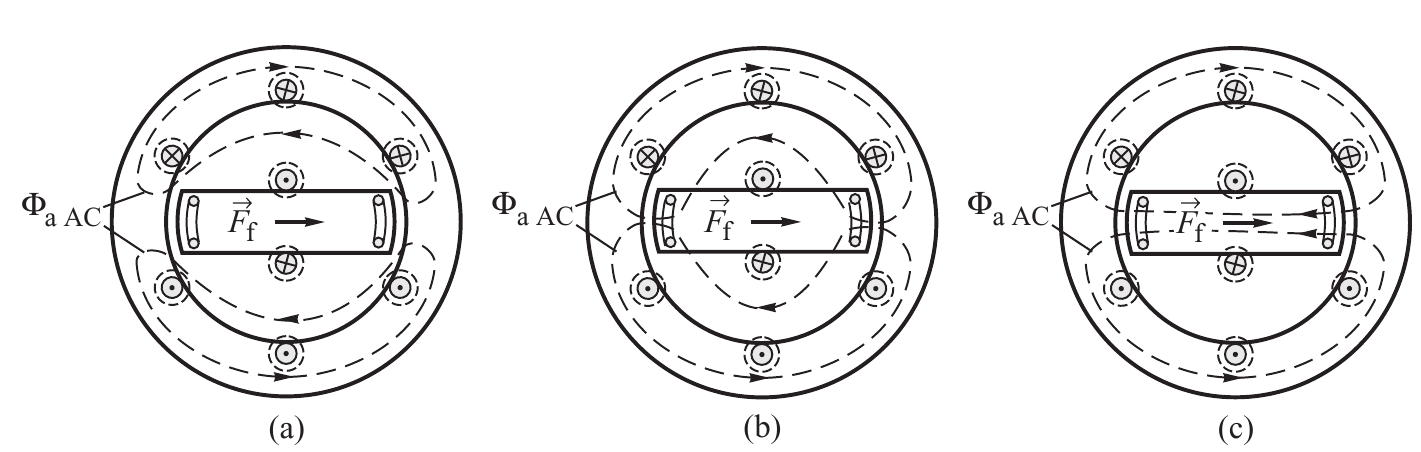}
  \caption[3 different operation states of the synchronous machine.]{ The path of the armature flux in: (a) the subtransient state (screening effect of the damper windings and the field winding); (b) the transient state (screening effect of the field and $g$-damper winding only); (c) the steady state \cite{powsysdynwiley}.
}
\label{fig:operationstates}
\end{figure}

% SUBTRANSIENT STATE
 Immediately after a fault, the current induced in both the rotor field and damper windings forces the armature reaction flux completely out of the rotor to keep the rotor flux linkages constant (this is also referred to as the \emph{Lenz effect}), see Figure \ref{fig:operationstates}a, and the generator is said to be in the \emph{subtransient state} \cite{powsysdynwiley,kundur}. 

% TRANSIENT STATE
As energy is dissipated in the resistance of the rotor windings, the currents maintaining constant rotor flux linkages decay with time allowing flux to enter the windings. As for typical generators the rotor $DQ$-damper winding resistances are the largest, the $DQ$-damper currents are the first to decay, allowing the armature flux to enter the rotor pole face. However, it is still forced out of the field winding and the $g$-damper winding itself, see Figure \ref{fig:operationstates}b. Then the generator is said to be in the \emph{transient state}. 

% STEADY STATE
The field and $g$-winding currents then decay with time to their steady state values allowing the armature reaction flux eventually to enter the whole rotor and assume the minimum reluctance path. Then the generator is in \emph{steady state} as illustrated in Figure \ref{fig:operationstates}c \cite{powsysdynwiley}.

% SCREENING OF g-WINDING
\begin{remark}[Properties of the $g$-damper winding]
  Since the field winding and $g$-damper winding resistances are comparable and are typically much smaller compared to the $DQ$-damper winding resistances, the field winding $f$ and the $g$-damper winding have similar properties in the different operation states.
  % the same screening effect along the $q$-axis at transient state as the field winding does for the $d$-axis \cite{powsysdynwiley}, see Figure \ref{fig:operationstates}. 
\end{remark}

\subsubsection{Synchronous machine parameters}
\label{sec:synchr-gener-param}

Depending on which state the synchronous machine is operating in, the effective impedance of the armature coil to any current change will depend on the parameters of the different circuits, their mutual coupling and whether or not the circuits are closed or not \cite{powsysdynwiley}. The inductances and timescales associated with transient and subtransient operation are defined by \cite{powsysdynwiley}
\begin{equation}
\begin{aligned}
L_d'&=L_d-\frac{\kappa ^2M_f^2}{L_f}, \qquad  T_{do}'= \frac{L_f}{R_f},\\
L_q'&=L_q-\frac{\kappa ^2M_g^2}{L_g}, \qquad  T_{qo}'= \frac{L_g}{R_g},\\
L_{d}''&=L_d-\kappa ^2\left[\frac{M_f^2L_D+M_D^2L_f-2M_fM_DL_{fD}}{L_fL_D-L_{fD}^2}\right],\\
L_{q}''&=L_q-\kappa ^2\left[\frac{M_g^2L_Q+ M_Q^2L_g   -2   M_g M_QL_{gQ}}{L_g L_Q-L_{gQ}^2}\right],\\
  T_{do}''&= \frac{1}{R_D}\left(L_D-\frac{L_{fD}^2}{L_f}\right), \quad T_{qo}''= \frac{1}{R_Q}\left(L_Q-\frac{L_{gQ}^2}{L_g}\right).
\end{aligned}\label{eq:subtransientindtime}
\end{equation}
% PARK'S TWO-REACTION THEORY
Based on the two-reaction theory of \cite{park1929two}, the corresponding $d$- and $q$-axis reactances for steady state operation ($X_d=\w_sL_d,X_q=\w_sL_q$), transient operation ($X_d'=\w_sL_d',X_q'=\w_sL_q'$) and subtransient operation ($X_d''=\w_sL_d'',X_q''=\w_sL_q''$) are defined.
% \begin{definition}
%   A \emph{reactance} (denoted by $X$) is defined as $X=\w_s L$ which relies that $\w\approx \w_s$ so that $X$ is constant \cite{powsysdynwiley}. Similarly define $X_d=\w_s L_d,X_q=\w_s L_q, X_d'=\w_s L_d', X_d''=\w_s L_d''$ where the parameters $L_d',L_d'',L_q',L_q''$ are defined  in \eqref{eq:subtransientindtime} \cite{powsysdynwiley}. 
% \end{definition}
\begin{remark}[Relation between (sub)transient reactances]
For realistic synchronous machines it holds that $X_d>X_d'>X_d''>0$ and $X_q\geq X_q'>X_q''>0$, where $X_q=X_q'$ holds for a salient-pole synchronous machine (where the $g$-damper winding is absent), see also  \cite[Table 4.3]{powsysdynwiley} and  \cite[Table 4.2]{kundur} for typical values of these reactances. 
\end{remark}

\begin{definition}[Saliency]
The \emph{(sub)transient saliency} is defined as the difference between the (sub)transient reactances, i.e. $X_d'-X_q';(X_d''-X_q'')$. We say that the (sub)transient saliency is negligible if $X_d'=X_q' ; (X_d''=X_q'')$.
\end{definition}
\noindent For both transient and subtransient  state of the machine, different assumptions can be made to obtain the corresponding (differential) equations of the synchronous machine. 

\subsection{Synchronous machine equations}
\label{sec:synchr-gener-equat}
% \subsubsection{Steady state operation}
% \label{sec:modredssop}
% %For details, we refer to Section 11.1.6.1 of \cite{powsysdynwiley}. 
% During steady state operation, the armature flux has penetrated through all the rotor circuits such that the damper and $g$ winding currents are zero and field  winding currents are constant. 
% \begin{assumption}
%   During steady state operation $I_f$ is constant and  $I_g=I_D=I_Q=0$.
% \end{assumption}
% \noindent  This assumption, together with \eqref{eq:Md-ind}, \eqref{eq:Mhq-ind}, implies that 
% \begin{align*}
%   \Psi_d=L_dI_d+\kappa M_fI_f, \qquad \Psi_q=L_qI_q
% \end{align*}
% at steady state operation. By Assumption \ref{ass:ssemfs}  we obtain
% \begin{align*}
%   V_d&=-RI_d-\w_s \Psi_q=-RI_d-X_qI_q\\
%   V_q&=-RI_q+\w_s \Psi_d=-RI_q+ X_d I_d+E_q. 
% \end{align*}
% where the internal voltage $E_q$ is defined as $E_q:=\w_s \kappa  M_fI_f$. % In matrix form this can be rewritten as
% % \begin{align}\label{eq:V=XI}
% %   \begin{bmatrix}
% %     V_d\\V_q
% %   \end{bmatrix}=
% %   \begin{bmatrix}
% %     0\\
% %     E_q
% %   \end{bmatrix}-
% % \begin{bmatrix}
% %     R & X_q\\
% %     -X_d& R
% %   \end{bmatrix}
% %          \begin{bmatrix}
% %            I_d\\I_q
% %          \end{bmatrix}.
% % \end{align}
% % \begin{remark}
% %   Note that at steady state we have $V_f=R_fI_f$ such that 
% %   \begin{align*}
% %     E_f:=\w_s \kappa  M_f V_f/R_f=\w_s \kappa  M_f I_f=E_q.
% %   \end{align*}
% % Together with \eqref{eq:V=XI}, this corresponds to Figure \ref{fig:gencirc}. 
% %\end{remark}

\subsubsection{Transient operation}
\label{sec:modredtranop}
In transient operation state the armature flux has penetrated the damper circuits and the field and $g$ windings screen the rotor body from the armature flux. The damper windings are no more effective ($\dot\Psi_D=\dot\Psi_Q=0$) and thus the damper currents are zero. 
\begin{assumption}[Transient operation]
  During transient operation $I_D=I_Q=0$.
\end{assumption}
From \eqref{eq:Md-ind},  $\Psi_d$ can be expressed in terms of $I_d,\Psi_f$ from which it follows that the internal (transient) and external emfs are related by
\begin{equation}
\begin{aligned}
  V_q&=-RI_q+\w_s \left[I_d\left(L_d-\frac{\kappa ^2M_f^2}{L_{f}}\right)+\frac{\kappa M_f}{L_{f}}\Psi_f\right]\\
&=-RI_q+\w_s L_d' I_d+E_q'=-RI_q+X_d' I_d+E_q'
\end{aligned}\label{eq:VqEq'}
\end{equation}
where the internal emf $E_q'$ is defined by $E_q':=\w_s\left(\frac{\kappa  M_f}{L_f}\right)\Psi_f$. Similarly, from  \eqref{eq:Mhq-ind} we can express $\Psi_q$ in terms of $I_q,\Psi_g$ to obtain 
\begin{align}\label{eq:VdEd'}
  V_d=-RI_d-X_q'I_q+E_d'
\end{align}
where $E_d':=-\w_s\left(\frac{\kappa  M_g}{L_g}\right)\Psi_g$. However, the flux linkages $\Psi_f,\Psi_g$ do not remain constant during transient operation but change slowly as the armature flux penetrates through the windings \cite{powsysdynwiley}.  By substituting \eqref{eq:dpsif}, the differential equation for $E_q'$ is derived as
\begin{equation}
\begin{aligned}
  \dot E_q'&=\w_s \frac{\kappa M_f}{L_f}\dot \Psi_f=\w_s \frac{\kappa M_f}{L_f}(V_f-R_fI_f)
%  &=\w_s \frac{\kappa M_f}{L_f}(V_f-R_f\frac{\Psi_f-\kappa M_fI_d}{L_f})\\
  =\w_s \frac{\kappa M_f}{L_f}(V_f+R_f\frac{\kappa M_fI_d}{L_f})-\frac{R_f}{L_f}E_q'\\
%  &=\w_s \frac{\kappa M_f}{L_f}V_f+\frac{\w_s \kappa ^2M_f^2}{T_{do}'L_f}I_d-\frac{1}{T_{do}'}E_q'\\
  &=\frac{E_f+(X_d-X_d')I_d-E_q'}{T_{do}'},
\end{aligned}\label{eq:transdyn}
\end{equation}
where we used that $I_D=0$, $T_{do}'=L_f/R_f,$ and the definition $E_f:=\w_s \kappa  M_fV_f/R_f $ for the scaled excitation voltage. In a similar fashion the differential equation of $E_q'$ is derived to obtain
\begin{equation}
\begin{aligned}
  \dot E_d' &=\frac{-(X_q-X_q')I_q-E_d'}{T_{qo}'}.
\end{aligned}\label{eq:transdynEd}
\end{equation}

\subsubsection{Subtransient operation} 
\label{sec:modredsubtranop}
During the subtransient period the rotor damper coils screens both the field winding and the rotor body from changes in the armature flux. The field and $g$ flux linkages $\Psi_f,\Psi_g$ remain constant during this period while the damper winding flux linkages decay with time as the generator moves towards the transient state \cite{powsysdynwiley}. Therefore, we make here a different assumption compared to Section \ref{sec:modredtranop}. 
\begin{assumption}[Subtransient operation]\label{ass:subtranfg}
  During subtransient operation the flux linkages $\Psi_f,\Psi_g$ are constant.
\end{assumption}
Using equation \eqref{eq:Md-ind} one can express $\Psi_d$ in terms of $i_d,\Psi_f,\Psi_D$ to obtain \cite{powsysdynwiley} % equation (11.67) from \cite{powsysdynwiley}.
\begin{align*}
  \Psi_d&=L_d''I_d+k_1\Psi_f+k_2\Psi_D,\\
  k_1&=\kappa \cdot\frac{M_fL_D-M_DL_{fD}}{L_fL_D-L_{fD}^2}, \quad k_2=\kappa \cdot\frac{M_DL_f-M_fL_{fD}}{L_fL_D-L_{fD}^2}.
\end{align*}
Together with Assumption \ref{ass:ssemfs} this implies
\begin{align}
  V_q&=-RI_q+\w_s \Psi_d=-RI_q+\w_s L_d''I_d+\w_s k_1\Psi_f+\w_s k_2 \Psi_{D}\nonumber\\
  &=-RI_q+X_d''I_d+E_q''\label{eq:VqEq''}
\end{align}
where $E_q'':=\w_s(k_1\Psi_f+k_2 \Psi_{D})$. %and $k_1,k_2$ are defined by (11.68) of \cite{powsysdynwiley}. 
Similarly for the $q$-axis we obtain 
\begin{align*}
  \Psi_q&=L_q''I_q+k_3\Psi_g+k_4\Psi_Q,\\
  k_3&=\kappa \cdot\frac{M_gL_Q-M_QL_{gQ}}{L_gL_Q-L_{gQ}^2}, \quad k_4=\kappa \cdot\frac{M_QL_g-M_gL_{gQ}}{L_gL_Q-L_{gQ}^2}
\end{align*}
and 
\begin{align}
  V_d&=-RI_q-\w_s \Psi_q=-RI_d-\w_s L_q''I_q-\w_s k_3\Psi_g-\w_s k_4 \Psi_{Q}\nonumber\\
  &=-RI_d-X_q''I_q+E_d''\label{eq:VdEd''}
\end{align}
where $E_d'':=-\w_s(k_3\Psi_g+k_4 \Psi_{Q})$. By eliminating the $I_f,\Psi_d$ from \eqref{eq:Md-ind} and $I_g,\Psi_q$ from \eqref{eq:Mhq-ind} we obtain respectively 
\begin{align*}
  I_D&=\frac{ \kappa  L_{fD} M_fI_d- \kappa  L_f M_DI_d-L_{fD} \Psi_f+L_f \Psi_D}{L_D L_f-L_{fD}^2}\\
  I_Q&=\frac{ \kappa  L_{gQ} M_gI_q- \kappa  L_g M_QI_q-L_{gQ} \Psi_g+L_g \Psi_Q}{L_Q L_g-L_{gQ}^2}.
\end{align*}
Using Assumption \ref{ass:subtranfg}  we find that 
\begin{align}\label{eq:Edq''}
  \dot E_q''=\w_s k_2\dot \Psi_{D}=-\w_s k_2R_DI_D, \qquad \dot E_d''=-\w_s k_4\dot \Psi_Q=\w_s k_4R_QI_Q,
\end{align}
which can be rewritten as 
\begin{align}
  T_{do}''\dot E_q''&=E_q'-E_q''+(X_d'-X_d'')I_d,\label{eq:subtransEq}\\
  T_{qo}''\dot E_d''&=E_d'-E_d''-(X_q'-X_q'')I_q\label{eq:subtransEd}.
\end{align}

\subsubsection{Frequency dynamics}
\label{sec:swingeq}
Recall that the frequency dynamics of the full-order model is described by \eqref{eq:dwfull}:
\begin{align*}
  J\dot \w&=\Psi_qI_d-\Psi_dI_q-d\w+\tau.
\end{align*}
By Assumption \ref{ass:ssemfs}, the latter differential equation is rewritten as
\begin{align*}
  J\dot \w %& =-\frac1{\w_s}\left((V_d+RI_d)I_d+(V_q+RI_q)I_q\right)-d\w+\tau \\
           & =-\frac{1}{\w_s}\left(V_dI_d+V_qI_q+R(I_d^2+I_q^2)\right)-d\w+\tau.
\end{align*}
%see also equation (11.87) in \cite{powsysdynwiley}. 
Since the mechanical damping force $F_d=-d\w$ is often very small in large machines, it is neglected in many synchronous machine models \cite{powsysdynwiley,kundur}.
\begin{assumption}[Negligible mechanical damping]\label{ass:nomechdamp}
  The mechanical damping of the synchronous machine is negligible, i.e., $d=0 $. 
\end{assumption}
It is convenient to express the frequency dynamics in terms of the \emph{frequency deviation} with respect to the synchronous frequency $\w_s$. By Assumption \ref{ass:nomechdamp}, the frequency deviation $\Delta \w:=\w-\w_s$ is governed by the differential equation
\begin{equation}
\begin{aligned}\label{eq:freqdynt}
 J \Delta\dot { \w}&=-\frac{1}{\w_s}\left(V_dI_d+V_qI_q+R(I_d^2+I_q^2)\right)+\tau.
\end{aligned}
\end{equation}
After multiplying \eqref{eq:freqdynt} by the synchronous frequency $\w_s$ one obtains
\begin{equation}
\begin{aligned}\label{eq:freqdynpower}
  M\Delta\dot {\w}&=-\left(V_dI_d+V_qI_q+R(I_d^2+I_q^2)\right)+\w_s\tau=-P_e+P_m,
\end{aligned}
\end{equation}
where it is common practice to define the quantity $M:=\w_s J$ \cite{kundur,powsysdynwiley}. 
Here the mechanical power injection is denoted by $P_m=\w_s\tau$ and the electrical power $P_e$ produced by the synchronous generator is equal to 
\begin{align*}
  P_e=V_dI_d+V_qI_q+R(I_d^2+I_q^2).
\end{align*}

\begin{remark}[Alternative formulation of frequency dynamics]
  Note that  by equations \eqref{eq:VqEq''} and \eqref{eq:VdEd''} the electrical power $P_e$ produced by the synchronous generator alternatively takes the form 
\begin{align}\label{eq:Pe}
  P_e=E_d''I_d+E_q''I_q+(X_d''-X_q'')I_dI_q
\end{align}
such that the differential equation \eqref{eq:freqdynpower} can be rewritten as
  \begin{equation}
  \begin{aligned}\label{eq:subtransfreq}
    M\Delta\dot { \w}&=-E_d''I_d-E_q''I_q-(X_d''-X_q'')I_dI_q+P_m.
  \end{aligned}
\end{equation}
\end{remark}

\subsection{Synchronous machine models}
\label{sec:synchr-mach-models}

% Based on the two-reaction theory \cite{park1929two} of Park, the generator equivalent circuits can be represented as in Figure \ref{fig:gencirc}. 

Based on the results established in Section \ref{sec:synchr-gener-equat}, several generator models with decreasing  complexity and accuracy are developed. In each model reduction step, the validity and assumptions made in the corresponding model are discussed.

\subsubsection{Sixth-order model}
\label{sec:sixth-order-model}
By combining the equations derived in Section \ref{sec:synchr-gener-equat},  % \ref{sec:modredtranop}, \ref{sec:modredsubtranop} and \ref{sec:swingeq}
a sixth-order model describing the synchronous generator is obtained. In particular, by \eqref{eq:transdyn}, \eqref{eq:transdynEd}, \eqref{eq:subtransEq}, \eqref{eq:subtransEd} and \eqref{eq:subtransfreq} we obtain the following system of ordinary differential equations describing the generator dynamics \cite{powsysdynwiley}:
\begin{subequations}\label{eq:SG6order}
\begin{align}
\dot \delta&=\Delta \w \\
  M\Delta \dot  \w&=P_{m}-E_d''I_d-E_q''I_q-(X_d''-X_q'')I_dI_q\label{eq:wdyn6o}\\
T_{do}'\dot E_q'&=E_{f}-E_q'+I_d(X_d-X_d') \label{eq:Eqq}\\
T_{qo}'\dot E_d'&=-E_d'-I_q(X_q-X_q')\label{eq:Edd}\\
T_{do}''\dot E_q''&=E_q'-E_q''+I_d(X_d'-X_d'')\label{eq:Eqqq}\\
T_{qo}''\dot E_d''&=E_d'-E_d''-I_q(X_q'-X_q''),\label{eq:Eddd}
\end{align}
\end{subequations}
%Note that the electrical power ports to be connected with the external network is given by the pairs $(V_d,I_d),(V_q,I_q)$ where $V_d,V_q$ are related by $E_q'',E_d''$ as 
where $\delta(t):=\gamma(t)-\w_s t$ represents the rotor angle with respect to the synchronous rotating reference frame. By equations \eqref{eq:VqEq''} and \eqref{eq:VdEd''}  the internal and external voltages of the synchronous generator are related by
\begin{align}\label{eq:VE}
  \begin{bmatrix}
    V_d\\
    V_q
  \end{bmatrix}=
  \begin{bmatrix}
    E_d''\\E_q''
  \end{bmatrix}-
  \begin{bmatrix}
    R&X_q''\\-X_d''&R
  \end{bmatrix}
 \begin{bmatrix}
   I_d\\I_q
 \end{bmatrix}.
\end{align}
It is worth noting the similar structure of these (differential) equations.  The equation \eqref{eq:VE} and the right hand side of \eqref{eq:Eqq}-\eqref{eq:Eddd}  relates to the equivalent $d$- or $q$-axis generator circuits, with the resistances neglected, as shown in Figure \ref{fig:gencirc}. In particular, the algebraic equation \eqref{eq:VE} corresponds to the right-hand side of Figure \ref{fig:gencirc}. In addition, the subtransient dynamics \eqref{eq:Eqqq}, \eqref{eq:Eddd} corresponds to the center reactances $X_d'-X_d'', X_q'-X_q''$ illustrated in Figure \ref{fig:gencirc} and the transient dynamics \eqref{eq:Eqq}, \eqref{eq:Edd} corresponds to the left-hand side of Figure \ref{fig:gencirc}. Observe that there is no additional voltage in the $q$-axis due to the absence of a field winding on this axis. 

\begin{figure}
  \centering
  \begin{circuitikz}[scale=1]
    \draw[] (0,0) to [voltage source,*-*,v^=$\overline E_{f}$] ++(0,2)
    to [american inductor,l^=$ j(X_d-X_d') $,*-*] ++(2.2,0) node(14){}
    to [american inductor,l^=$ j(X_d'-X_d'') $,*-*] ++(2.2,0)
    node(15){} to [american inductor,l^=$ jX_d''
    $,*-o,i>=$\overline
    I_{d}$]
    ++(2,0) node(1){}; \draw[very thick] (0,0) -- (6.4,0) node(5){};
    % \draw[-latex,thick] (6,0.2) --node[right]{$V_q$} (6,1.8);
    \draw[] (2.2,0) node(11){} (4.4,0) node(12){} ;
    \draw[-latex,thick] (11) --node[right]{$\overline E_q'$} (14);
    \draw[-latex,thick] (12) --node[right]{$\overline E_q''$} (15);
    \draw[->,rotate=0,thick] (3.4,0.6) arc (-90:180:0.4) node[right]{$T_{do}''$};
    \draw[->,rotate=0,thick] (1.1,0.6) arc (-90:180:0.4) node[right]{$T_{do}'$};   %{tran};
    % \draw[->,rotate=0,thick] (5.6,0.6) arc (-90:180:0.4)
    % node[right]{s.s.};
    \foreach \i in {3.3} { \draw[] (0,0-\i) to[*-*] ++(0,2) to
      [american inductor,l^=$ j(X_q-X_q') $,*-*] ++(2.2,0) node(14){}
      to [american inductor,l^=$ j(X_q'-X_q'') $,*-*] ++(2.2,0)
      node(15){} to [american inductor,l^=$ jX_q''
      $,*-o,i>=$\overline
      I_{q}$]
      ++(2,0) node(2){}; \draw[very thick] (0,-\i) to (6.4,-\i)
      node(6){};
      % \draw[-latex,thick] (6,0.2-\i) --node[right]{$V_d$}
      % (6,1.8-\i);
      \draw[] (2.2,-\i) node(11){} (4.4,-\i) node(12){} ;
      \draw[-latex,thick] (11) --node[right]{$\overline E_d'$} (14);
      \draw[-latex,thick] (12) --node[right]{$\overline E_d''$} (15);
      \draw[->,rotate=0,thick] (3.4,0.6-\i) arc (-90:180:0.4) node[right]{$T_{qo}''$};
      \draw[->,rotate=0,thick] (1.1,0.6-\i) arc (-90:180:0.4) node[right]{$T_{qo}'$};;
      % \draw[->,rotate=0,thick] (5.6,0.6-\i) arc (-90:180:0.4)
      % node[right]{s.s.};
    } \draw[-latex,thick] (5) to node[right]{$\overline V_q$} (1);
    \draw[-latex,thick] (6) to node[right]{$\overline V_d$}(2);
  \end{circuitikz}
  \caption{The generator equivalent circuits for both $dq$-axes in case the stator winding resistance $R$ is neglected \cite{powsysdynwiley}. % For aesthetic reasons  the subscript $i$ is dropped.
  }
\label{fig:gencirc}
\end{figure}
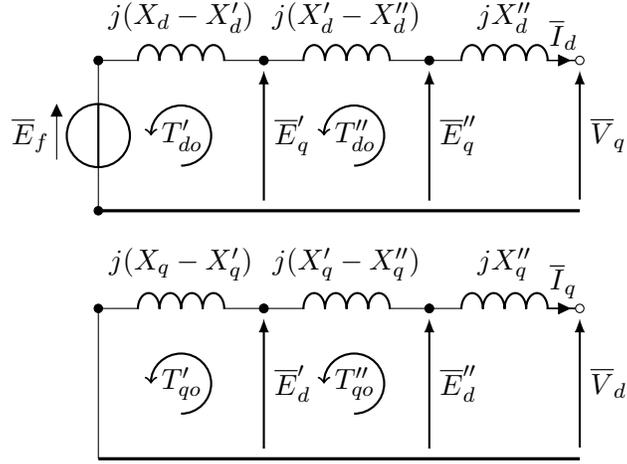

\subsubsection{Fifth-order model}
In a salient-pole generator the laminated rotor construction prevent eddy currents flowing in the rotor body such that there is no screening in the $q$-axis implying that $X_q=X_q'$ \cite{powsysdynwiley}. In that case the $g$-winding is absent in the full-order model \eqref{eq:Mhfullordermodel}. Consequently, $E_d'$ is absent  so that the fifth-order model becomes
\begin{equation}
\begin{aligned}\label{eq:SG5order}
\dot \delta&=\Delta \w \\
  M\Delta \dot  \w&=P_{m}-E_d''I_d-E_q''I_q-(X_d''-X_q'')I_dI_q\\
T_{do}'\dot E_q'&=E_{f}-E_q'+I_d(X_d-X_d')\\
T_{do}''\dot E_q''&=E_q'-E_q''+I_d(X_d'-X_d'')\\
T_{qo}''\dot E_d''&=-E_d''-I_q(X_q'-X_q'').
\end{aligned}
\end{equation}

\subsubsection{Fourth-order model}
In this model the subtransient dynamics of the sixth-order model induced by the damper windings is neglected. This is motivated by the fact that $T_{do}''\ll T_{do}', T_{qo}''\ll T_{qo}'$. Therefore the dynamics corresponding with $E_q'',E_d''$ is at much faster time scale compared to the  $E_q',E_d'$ dynamics. As a result, at the slower time-scale we obtain the quasi steady state condition \cite{ahmed1982reduced}:
\begin{equation}
\begin{aligned}
  E_q''&=E_q'+I_d(X_d'-X_d'')\\
  E_d''&=E_d'-I_q(X_q'-X_q'').
\end{aligned}\label{eq:subtrans0}
\end{equation}
Substitution of the latter algebraic equations in the remaining four differential equations yields the fourth-order model
\begin{equation}
\begin{aligned}\label{eq:SG4order}
\dot \delta&=\Delta \w \\
  M\Delta \dot  \w&=P_{m}-D\Delta\w-E_d'I_d-E_q'I_q-(X_d'-X_q')I_dI_q\\
T_{do}'\dot E_q'&=E_{f}-E_q'+I_d(X_d-X_d')\\
T_{qo}'\dot E_d'&=-E_d'-I_q(X_q-X_q').
\end{aligned}
\end{equation}
\begin{remark}[Transient operation]
  Note that \eqref{eq:subtrans0} together with \eqref{eq:VE} also implies \eqref{eq:VqEq'} and \eqref{eq:VdEd'} as expected since the subtransient dynamics is neglected. 
\end{remark}
As the damper windings are ignored, the air-gap power appearing in the frequency dynamics neglects the asynchronous torque produced by the damper windings. To compensate the effects of the damper windings a linear asynchronous damping power $D \Delta \w $ with damping constant $D>0$ is introduced \cite{powsysdynwiley}. However, more accurate nonlinear approximations of the damping power exist as well, see \cite[Chapter 5.2]{powsysdynwiley}. 

\subsubsection{Third-order model}
\label{sec:third-order-model}
Starting from the fourth-order model, we make here the same assumptions as done in the transition from the sixth-order model to the fifth-order model ($E_d'=0$) so that the third-order model, which also referred to as the \emph{flux-decay model} or \emph{one-axis model} \cite{kundur}, is given by 
\begin{subequations}\label{eq:SG3order}
\begin{align}
\dot \delta&=\Delta \w \\
  M\Delta \dot  \w&=-D\Delta \w+P_{m}-E_q'I_q-(X_d'-X_q')I_dI_q\label{eq:freqdyn3o}\\
T_{do}'\dot E_q'&=E_{f}-E_q'+I_d(X_d-X_d').
\end{align}
\end{subequations}

\subsubsection{Second-order classical model}
\label{sec:second-order-model}
The second-order model is derived from the fourth-order (or third-order) model by assuming that the internal emfs $E_q',E_d'$ are constant \cite{kundur,anderson1977,powsysdynwiley}. This can be validated if the timescales $T_{qo}',T_{do}'$ are large (of the order of a few seconds) so that the internal emfs $E_q',E_d'$ can be approximated by a constant (on a bounded time interval) provided that $E_f,I_d,I_q$ do not change much. From this assumption, a constant voltage behind transient reactance model is obtained which is commonly referred to as the \emph{constant flux linkage model} or \emph{classical model} \cite{kundur,powsysdynwiley,anderson1977}: 
\begin{equation}
\begin{aligned}\label{eq:SG2order}
\dot \delta&=\Delta \w \\
  M\Delta \dot  \w&=-D\Delta \w+P_{m}-E_q'I_q-E_d'I_d-(X_d'-X_q')I_dI_q
\end{aligned}
\end{equation}
The assumption that the changes in $dq$-currents and the internal emfs are small implies that only generators located a long way from the point of the disturbance should be represented by the classical model \cite{powsysdynwiley}. In addition, since the assumption that $E_q',E_q'$ is constant is only valid on a limited time-interval,  % constant only during a (short) period of time,
the classical model is only valid for analyzing the \emph{first swing stability} \cite{anderson1977}. Indeed, in for example \cite{caliskan2015uses} it was shown that the second-order swing equations \eqref{eq:SG2order} are not valid for asymptotic stability analyses. % was also shown in for example 

%\red{Relate this part with Tabuada?}

% \begin{remark}
%   Instead of initializing with the third-order model, we could have started with the fourth-order model. In that case an additional term $-E_d'I_d$ is added in the frequency dynamics and it must be assumed that $T_{do}'$ is large and that changes in $I_q$ are small so that $E_d'$ can be approximated by a constant (during a short time period). 
% \end{remark}

\section{Multi-machine models}
\label{sec:multi-machine-models}
To obtain a representation of the power grid, we consider a multi-machine network.  For simplicity we consider the case that each node in the network represents a synchronous machine, that is, each node represents either a synchronous generator, or a synchronous motor. In addition, we assume that the stator winding resistances and the resistances in the network are negligible. This assumption is valid for networks with high voltage transmission lines where the line resistances are negligible.
\begin{assumption}[Inductive lines]\label{ass:purelyinductivelines}
  The network is considered to be purely inductive and the stator winding resistances are negligible, i.e., $R=0$. 
\end{assumption}

In this section the multi-machine models starting from the sixth-, third-, and second-order models for the synchronous generator are established.  The derivations of the fourth- and fifth-order multi-machine models are omitted as these are very similar to ones presented in this section. % To this end,
To obtain reduced-order multi-machine models,  the equations for the nodal currents in the network are derived which are then substituted in the single generator models reformulated in Section \ref{sec:synchr-mach-models}.

% LET HAVE NETWORK EQUATIONS FOR BOTH SYSTEMS SEPARATELY 
\subsection{Sixth-order multi-machine model}
For the sixth (and fifth) order model(s) it is convenient to make the following assumption which is valid for synchronous generators with damper windings in both $d$- and $q$-axes \cite{powsysdynwiley}.
\begin{assumption}[$X_{di}''=X_{qi}''$]\label{ass:subtranssal}
  For each synchronous machine in the network, the subtransient saliency is negligible, i.e., $X_{di}''=X_{qi}'' \ \  \forall i\in\V$. 
\end{assumption}
By Assumption \ref{ass:subtranssal}, the second term of the electrical power \eqref{eq:Pe} appearing in the frequency dynamics \eqref{eq:wdyn6o} vanishes. Moreover, the assumption of $X_d'' = X_q''$ allows the two individual $d$- and $q$-axis circuits in Figure \ref{fig:gencirc} to be replaced by one equivalent circuit, see Figure \ref{fig:emfsubtran}. As a result, all the voltages, emfs and currents are phasors in the synchronous rotating reference frame of rather than their components resolved along the $d$- and $q$-axes. An important advantage of this is that the generator reactance may be treated in a similar way as the reactance of a transmission line, as we will show later. This has particular importance for multi-machine systems when combining the algebraic equations describing the generators and the network \cite{powsysdynwiley}.

%This assumption, allows to represent the synchronous generator as a subtransient voltage source behind a subtransient reactance, see Figure \ref{fig:emfsubtran}. 
\begin{figure}[h]
\begin{center}
\begin{circuitikz}
\draw[]  (0,0)   to [european voltage source,*-*,v_=$\overline E_i''$]  ++(0,2)
 to [american inductor,l^=$ jX_{di}'' $,*-o,i=$\overline I_i$] ++(2.5,0) node(1){};
\draw[very thick] (0,0) to[short] (2.5,0) node(2){};
\draw[thick,-latex] (2) to node[left]{$\overline V_i$} (1);
\end{circuitikz}
\end{center}
\caption{Subtransient emf behind a subtransient reactance.} 
\label{fig:emfsubtran}
\end{figure}
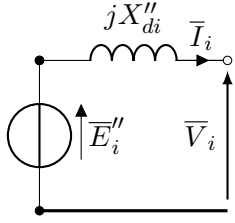
As illustrated in Figure \ref{fig:emfsubtran}, the internal and external voltages are related to each other by 
\begin{align} \label{eq:EVrelation}
\overline E_i''&=\overline V_i+jX_{di}''\overline I_i, \qquad \forall i\in\V.
\end{align}  
 % \subsubsection{Network equations}
 % \label{sec:network-equations}
Consider a power network where each node $i\in\V=\{1,2,\ldots,n\}$ represents a synchronous machine and each edge $(i,k)\in\E$ a transmission line, see Figure \ref{fig:indlinesubtrans} for a two-node case. 
\begin{figure}[h]
\begin{center}
\begin{circuitikz}
\draw[]  (0,0)   to [european voltage source,*-*,v_=$\overline E_{i}''$]  ++(0,2)
 to [american inductor,l^=$j X_{di}'' $,*-*] ++(2.5,0) node(1){}
 to [american inductor,l^=$ jX_{T_{ik}} $,*-*,i=$\overline I_{ik}$] ++(2.5,0) node(2){}
 to [american inductor,l^=$j X_{dk}'' $,*-*] ++(2.5,0) 
 to [european voltage source,v_<=$\overline E_{k}''$] ++(0,-2)  
to[short,*-*] ++(-7.5,0) 
;
\draw[] (2.5,0) node(3){}; \draw[] (5,0) node(4){};
\draw[thick,-latex] (3) to node[left]{$\overline V_i$} (1); 
\draw[thick, -latex] (4) to node[right]{$\overline V_k$} (2);
\end{circuitikz}
\end{center}
\caption{Interconnection of two synchronous machines governed by the 5th or 6th order model by a purely inductive transmission line with reactance $X_{T_{ik}}$.} 
\label{fig:indlinesubtrans}
\end{figure}
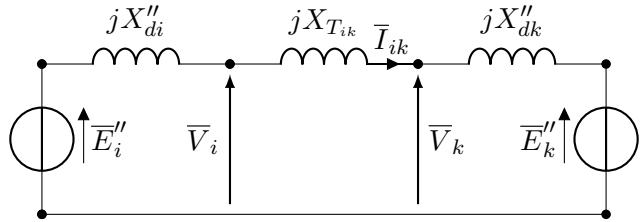
To derive the algebraic equations associated with the network, we assume that the network operates at steady state. Under this assumption, the network equations take the form
\begin{align*}
  \overline I_s=Y\overline V_s=\mathcal Y \overline E''_s
\end{align*}
where $\overline I_s,\overline V_s,\overline E_s''\in\mathbb C^n$ represent the nodal current and  external/internal voltage phasors with respect to the synchronous rotating reference frame and $Y\in\mathbb C^{n\times n}$ is the admittance matrix of the network. The admittance matrix $\mathcal Y\in\mathbb C^{n\times n}$ is obtained by adding the reactances $X_{di}'',i\in\V$ to the transmission line reactances, i.e.,  $\mathcal Y$   takes the form $\mathcal Y_{ii}=G_{ii}+jB_{ii}, \mathcal Y_{ik}=-G_{ik}-jB_{ik},i\neq k$ where the susceptances are given by \cite{schiffer2016survey}
\begin{equation}
\begin{aligned}
  B_{ik}&=
  \begin{cases}
    0&\text{if nodes $i$ and $k$ are not connected} \\
    -\frac1{X_{ik}}&\text{if nodes $i$ and $k$ are  connected}
  \end{cases}\\
B_{ii}&=\sum_{k\in\mathcal N_i}B_{ik}
\end{aligned}\label{eq:susceptances}
\end{equation}
and where $X_{ik}:=X_{T_{ik}}+X_{di}''+X_{dk}''$ is the total reactance between the subtransient voltage sources as illustrated in Figure \ref{fig:indlinesubtrans}. As we assumed purely inductive lines, see Assumption \ref{ass:purelyinductivelines},  the conductance matrix equals the zero matrix and thus $G_{ik}=0 \ \forall i,k\in\V$.
We note that in the derivations in Section \ref{sec:model-reduct-synchr} the currents $\overline I=V_q+jV_d$ and internal voltages $\overline E''=E_q''+jE_d''$ are expressed with respect to the \emph{local} $dq0$-reference frame of the synchronous machine. Thus, according to \eqref{eq:phasor-trans},  $\overline I_s=\diag(e^{-j(\w_st-\gamma_i)})\overline I=\diag(e^{j\delta_i})\overline I$ and similarly $\overline E_s''=\diag(e^{j\delta_i})\overline E''$. Consequently,  
% Since the voltages and currents of the generators are in \emph{local} $dq0$-coordinates, the algebraic equations associated with the network amount to \cite{schiffer2016survey}
% \red{
% However, in the derivations in Section \ref{sec:model-reduct-synchr} the currents and internal voltages are expressed with respect to the local $dq0$ reference frame of the synchronous machine. Thus, according to \eqref{eq:phasor-trans}, $I_s=\diag(e^{-j(\w_st-\gamma_i)})\overline I=\diag(e^{j\delta_i})\overline I$ and similarly $E_s''=\diag(e^{j\delta_i})\overline E''$. Consequently,  
% }
\begin{align}\label{eq:hatIV}
  \overline I&=\diag(e^{-j\delta_i})\mathcal Y \diag(e^{j\delta_i})  \overline E'',
%  \hat I=\mathcal B(R+jX)^{-1}\mathcal B^T\hat E_i''.
\end{align}
where $\overline I=\col(\overline I_1,\ldots,\overline I_n), \overline E''=\col(\overline E_1'',\ldots,\overline E_n'')$. 
% consider reactances as part of the network
% \begin{remark}\label{rem:partnetwork}
% In this way the reactances $X_{di}'',i\in\V$ are considered part of the network (additional inductive lines) and by \eqref{eq:susceptances} they are implicitly included into the susceptance matrix $B$.
% \end{remark}
Then the $dq$-current phasor at node $i$ takes the form
\begin{align}\label{eq:Ii}
%   \overline I_1&=e^{-j\delta_1}(\mathcal Y_{11}e^{j\delta_1}\overline V_1+\mathcal Y_{12}e^{j\delta_{2}}\overline V_2+\cdots+\mathcal Y_{1N}e^{j\delta_{N}}\overline V_N)\\
% &=\mathcal Y_{11}\overline V_1+\mathcal Y_{12}e^{-j\delta_{12}}\overline V_2+\cdots+\mathcal Y_{1N}e^{-j\delta_{1N}}\overline V_N\\
% &\Downarrow\\
\overline I_i&=\mathcal Y_{ii}\overline E_i''+\sum_{k\in\mathcal N_i}\mathcal Y_{ik}e^{-j\delta_{ik}}\overline E_k''.
\end{align}
 Using the phasor representation $\overline E_i''=E_{qi}''+jE_{di}'', \overline I_i =I_{qi}+jI_{di}$, and equating both the real and imaginary part of equation \eqref{eq:Ii}, we obtain after rewriting% the $dq$-current at  node $i$ takes the form%  we obtain %resulting into
\begin{equation}
\begin{alignedat}{2}\label{eq:currentsmulti}
  I_{di}&=&&B_{ii}E_{qi}''-\sum_{k\in\mathcal N_i}\left[B_{ik}(E_{dk}''\sin \delta_{ik}+E_{qk}''\cos \delta_{ik})   \right],\\
I_{qi}&=\ &-&B_{ii}E_{di}''-\sum_{k\in\mathcal N_i}\left[B_{ik}(E_{qk}''\sin \delta_{ik}-E_{dk}''\cos \delta_{ik})\right].
\end{alignedat}
\end{equation}
\begin{remark}[Nonzero transfer conductances]\label{rem:Idqresistance}
Compared to \eqref{eq:currentsmulti},  a slightly more complicated expression for the $dq$-currents  can be derived in the more general case where the transfer conductances are nonzero, see e.g. \cite{schiffer2016survey}. 
\end{remark}
By substituting the network equations \eqref{eq:currentsmulti} into the sixth-order model of the synchronous machine derived in Section \ref{sec:sixth-order-model}, the multi-machine model \eqref{eq:multimach6} is obtained. % a multi-machine model   can be established by substituting the network equations \eqref{eq:currentsmulti} into the sixth-order model \eqref{eq:SG6order} we obtain the multi-machine model \eqref{eq:multimach6}.
A subscript $i$ is added to the model \eqref{eq:SG6order} to indicate that this is the model of synchronous machine $i\in\V$.
\begin{align}
%\begin{aligned}
\dot \delta_i&=\Delta \w_i \nonumber\\
  M_i\Delta \dot \w_i&=P_{mi}+\sum_{k\in\mathcal N_i}B_{ik}\Big[(E_{di}''E_{dk}'' +E_{qi}''E_{qk}'')\sin \delta_{ik}+(E_{di}''E_{qk}''-E_{qi}''E_{dk}'')\cos \delta_{ik}\Big]\nonumber\\
T_{doi}'\dot E_{qi}'&=E_{fi}-E_{qi}'+(X_{di}-X_{di}')(B_{ii}E_{qi}''-\sum_{k\in\mathcal N_i}\left[B_{ik}(E_{dk}''\sin \delta_{ik}+E_{qk}''\cos\delta_{ik})   \right])\nonumber\\
T_{qoi}'\dot E_{di}'&=-E_{di}'+(X_{qi}-X_{qi}')(B_{ii}E_{di}''-\sum_{k\in\mathcal N_i}\left[B_{ik}(E_{dk}''\cos\delta_{ik}-E_{qk}''\sin\delta_{ik})\right]) \label{eq:multimach6}\\
T_{doi}''\dot E_{qi}''&=E_{qi}'-E_{qi}''+(X_{di}'-X_{di}'')(B_{ii}E_{qi}''-\sum_{k\in\mathcal N_i}\left[B_{ik}(E_{dk}''\sin\delta_{ik}+E_{qk}''\cos\delta_{ik})   \right])\nonumber\\
T_{qoi}''\dot E_{di}''&=E_{di}'-E_{di}''+(X_{qi}'-X_{qi}'')(B_{ii}E_{di}''-\sum_{k\in\mathcal N_i}\left[B_{ik}(E_{dk}''\cos\delta_{ik}-E_{qk}''\sin\delta_{ik})\right])\nonumber
%\end{aligned}
\end{align}
The electrical power $P_{ei}$ produced by synchronous machine $i$ is obtained from \eqref{eq:Pe} and \eqref{eq:currentsmulti}, and is given by
\begin{equation}
\begin{aligned}
  P_{ei}&=E_{di}''I_{di}+E_{qi}''I_{qi}\\
% &=E_{di}''(B_{ii}E_{qi}''-\sum_{k\in\mathcal N_i}\Big[B_{ik}(E_{dk}''\sin\delta_{ik}+E_{qk}''\cos\delta_{ik})   \Big]\\
% &+E_{qi}''(-B_{ii}E_{di}''-\sum_{k\in\mathcal N_i}\Big[B_{ik}(E_{qk}''\sin\delta_{ik}-E_{dk}''\cos\delta_{ik})\Big])\\
% &=-\sum_{k\in\mathcal N_i}E_{di}''\Big[B_{ik}(E_{dk}''\sin\delta_{ik}+E_{qk}''\cos\delta_{ik})   \Big]+E_{qi}''\Big[B_{ik}(E_{qk}''\sin\delta_{ik}-E_{dk}''\cos\delta_{ik})\Big]\\
&=\sum_{k\in\mathcal N_i}\underbrace{-B_{ik}\Big[(E_{di}''E_{dk}'' +E_{qi}''E_{qk}'')\sin \delta_{ik}+(E_{di}''E_{qk}''-E_{qi}''E_{dk}'')\cos \delta_{ik}\Big]}_{P_{ik}}.
\end{aligned}\label{eq:Pei6order}
\end{equation}
\begin{remark}[Energy conservation]\label{rem:Pei0}
 Since the transmission lines are purely inductive by assumption, there are no energy losses in the transmission lines implying that the following energy conservation law holds: $P_{ik}=-P_{ki}$ where $P_{ik}$ given in \eqref{eq:Pei6order} represents the power transmission from node $i$ to node $k$. In particular, we also have $\sum_{i\in\V}P_{ei}=0$ with $P_{ei}$ is given by \eqref{eq:Pei6order}.
\end{remark}
\begin{remark}[Including resistances]\label{rem:nonzeroG}
  While in the above model the resistances of the network and the stator windings are neglected, the model easily extends to the case of nonzero resistances. This can be done following the same procedure as before but instead substituting the more complicated expression for the currents $I_{di},I_{qi}$, see Remark \ref{rem:Idqresistance}. 
\end{remark}

\subsection{Third-order multi-machine model}
%--------------------------------------------------
% MOVE THIS TO NETWORK EQUATIONS OF FOURTH AND SECOND
The derivation of the third-order multi-machine models proceeds along the same lines as for the sixth-order model. For similar reasons as for the sixth- and fifth-order models, it is convenient for the 2nd, 3rd and 4th order multi-machine models to assume that the transient saliency is negligible. 
\begin{assumption}[$X_{di}'=X_{qi}'$]\label{ass:transsalneg}
  The transient saliency is negligible: $X_{di}'=X_{qi}' \ \  \forall i\in\V$. 
\end{assumption}
By making the \emph{classical assumption} that $X_d'=X_q'$, the second term of the electrical power appearing in the frequency dynamics \eqref{eq:freqdyn3o} vanishes \cite{powsysdynwiley}. In addition, 
\begin{figure}
\begin{center}
\begin{circuitikz}
\draw[]  (0,0)   to [european voltage source,*-*,v_=$\overline E_i'$]  ++(0,2)
 to [american inductor,l^=$ jX_{di}' $,*-o,i=$\overline I_i$] ++(2.5,0) node(1){};
\draw[very thick] (0,0) to[short] (2.5,0) node(2){};
\draw[thick,-latex] (2) to node[left]{$\overline V_i$} (1);
\end{circuitikz}
\end{center}
\caption{Single generator equivalent circuit in case the transient saliency is neglected \cite{powsysdynwiley}}
\label{fig:oneeqcirc}
\end{figure}
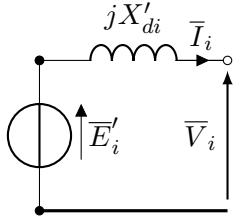
the assumption of $X_d' = X_q'$ allows the separate $d$ and $q$-axis circuits shown in Figure \ref{fig:gencirc} to be replaced by one simple equivalent circuit, see Figure \ref{fig:oneeqcirc}, representing a transient voltage source behind a transient reactance. 
\begin{remark}[Negligible transient saliency]
Although there is always some degree of transient saliency implying that $X_{di}'\ne X_{qi}'$, it should be noted that if the network reactances are relatively large, then the effect of the transient saliency on the power network dynamics is negligible making Assumption \ref{ass:transsalneg} acceptable \cite{powsysdynwiley}. 
  % Note that in all synchronous generators there is always some degree of transient saliency ($X_d'\neq X_q'$) \cite{powsysdynwiley}. However, when the network reactance is large the classical model, then the assumption $X_d'=X_q'$ is more validated as the influence of the transient reactances are relatively small  % E.g. suppose that the reactance of a transmission line is $X$. Then if $X$ is large, 
  % then $\frac{X_d'+X}{X_q'+X}\approx 1$, see also Section 5.4.2.2. of
  % \cite{powsysdynwiley}.  
\end{remark}

%--------------------------------------------------

Similar as before, the interconnection of two synchronous machines can be represented as in Figure \ref{fig:indlinetrans}. 
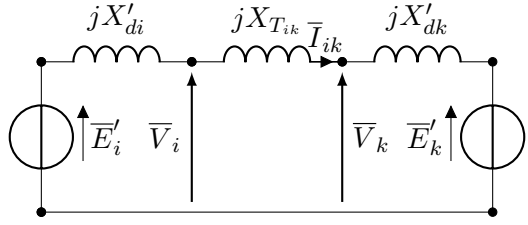
\begin{figure}[h]
\begin{center}
\begin{circuitikz}
\draw[]  (0,0)   to [european voltage source,*-*,v_=$\overline E_{i}'$]  ++(0,2)
 to [american inductor,l^=$j X_{di}' $,*-*] ++(2,0) node(1){}
 to [american inductor,l^=$ jX_{T_{ik}} $,*-*,i=$\overline I_{ik}$] ++(2,0) node(2){}
 to [american inductor,l^=$j X_{dk}' $,*-*] ++(2,0) 
 to [european voltage source,v_<=$\overline E_{k}'$] ++(0,-2)  
to[short,*-*] ++(-6,0) 
;
\draw[] (2,0) node(3){}; \draw[] (4,0) node(4){};
\draw[thick,-latex] (3) to node[left]{$\overline V_i$} (1); 
\draw[thick, -latex] (4) to node[right]{$\overline V_k$} (2);
\end{circuitikz}
\end{center}
\caption{Interconnection of two synchronous machines governed by the 2nd, 3rd or 4th order model by a purely inductive transmission line with reactance $X_{T_{ik}}$.} 
\label{fig:indlinetrans}
\end{figure}
As illustrated in this figure, the internal and external voltages are related to each other by \cite{powsysdynwiley}
\begin{align} \label{eq:EVrelation}
\overline E_i'&=\overline V_i+jX_{di}'\overline I_i, \qquad \forall i\in\V.
\end{align}  
 The algebraic equations associated with the network amount to \cite{schiffer2016survey}
\begin{align}\label{eq:IYEp}
  \overline I&=\diag(e^{-j\delta_i})\mathcal Y \diag(e^{j\delta_i})  \overline E',
\end{align}
resulting in a similar expression for the $dq$-currents as for the sixth-order model:
\begin{equation}
\begin{alignedat}{2}\label{eq:currentsmultitrans}
  I_{di}&=&&B_{ii}E_{qi}'-\sum_{k\in\mathcal N_i}\left[B_{ik}(E_{dk}'\sin \delta_{ik}+E_{qk}'\cos \delta_{ik})   \right],\\
I_{qi}&=\ &-&B_{ii}E_{di}'-\sum_{k\in\mathcal N_i}\left[B_{ik}(E_{qk}'\sin \delta_{ik}-E_{dk}'\cos \delta_{ik})\right].
\end{alignedat}
\end{equation}
% Note that in this case the susceptances satisfy $B_{ik}=-\frac1{X_{ik}}, \ \forall (i,k)\in\E$ where $X_{ik}:=X_{T_{ik}}+X_{di}'+X_{dk}'$ is the total reactance between the transient voltage sources as illustrated in Figure \ref{fig:indlinetrans}.
% consider reactances as part of the network
By using the third-order model of the synchronous machine \eqref{eq:SG3order},  the network equations \eqref{eq:currentsmultitrans}, and the fact that that $E_{di}'=0$ for the third-order model, the flux-decay (or one-axis) multi-machine model is obtained.% a multi-machine model can be established. In particular, note that $E_{di}'=0$ for the third order model implying that the multi-machine models takes the form 
% From \eqref{eq:SG3order}, \eqref{eq:currentsmulti}, Remark \ref{rem:partnetwork} and the fact that $E_{di}'=0$, the flux-decay (or one-axis) multi-machine model is obtained.
\begin{equation}
\begin{aligned}
\dot \delta_i&=\Delta \w_i \\
 M_i\Delta \dot \w_i&=P_{mi}-D_i\Delta\w_i+\sum_{k\in\mathcal N_i}B_{ik}E_{qi}'E_{qk}'\sin \delta_{ik}\\
T_{doi}'\dot E_{qi}'&=E_{fi}-E_{qi}'+(X_{di}-X_{di}')(B_{ii}E_{qi}'-\sum_{k\in\mathcal N_i}B_{ik} E_{qk}'\cos \delta_{ik}))
\end{aligned}\label{eq:multimach3}
\end{equation}
It is observed that, similar as for the sixth-order multi-machine model \eqref{eq:multimach6}, Remark \ref{rem:Pei0} and Remark \ref{rem:nonzeroG} also hold  for the third-order model \eqref{eq:multimach3}. 
%--------------------------------------------------

\subsection{The classical multi-machine network}
\label{sec:class-multi-mach}
The derivation of the classical second-order swing equations takes a slightly different approach compared to the multi-machine models obtained previously. For completeness, the derivation of the second-order multi-machine model with $RL$-transmission lines is given in this section. 

Suppose that Assumption \ref{ass:transsalneg} holds. Let the transient voltage phasor be represented as $\overline E'_i=e^{j\alpha_i}|\overline{E}'_i|$, then by \eqref{eq:IYEp} we have % that $\mathcal Y_{ii}=G_{ii}+j B_{ii}$ and $\mathcal Y_{ik}=-G_{ik}-jB_{ik}$ such that
%we obtain then by \eqref{eq:IYEp}   it follows that \cite{schiffer2016survey} 
% \begin{align*}
% %  \overline I&=\diag(e^{-j\delta_i})\mathcal Y\diag(e^{j\delta_i})\overline E'\\
% % \end{align*}
% % More explicitly we have that \cite{schiffer2016survey}
% % \begin{align*}
% %   \overline I_1&=e^{-j\delta_1}(\mathcal Y_{11}e^{j\delta_1}\overline E'_1+\mathcal Y_{12}e^{j\delta_{2}}\overline E'_2+\cdots+\mathcal Y_{1N}e^{j\delta_{N}}\overline E'_N)\\
% % &=\mathcal Y_{11}\overline E'_1+\mathcal Y_{12}e^{-j\delta_{12}}\overline E'_2+\cdots+\mathcal Y_{1N}e^{-j\delta_{1N}}\overline E'_N\\
% % &\Downarrow\\
% \overline I_i&=\mathcal Y_{ii}\overline E'_i+\sum_{k\in\mathcal N_i}\mathcal Y_{ik}e^{-j\delta_{ik}}\overline E'_k, \qquad  i\in\V.
% \end{align*}

% \begin{align*}
%   -\mathcal Y_{ik}e^{-j\delta_{ik}}\overline E'_k&=-\mathcal Y_{ik}e^{-j\delta_{ik}} e^{j\alpha_k}|\overline{E}'_k|=-\mathcal Y_{ik}e^{-j(\delta_{ik}-\alpha_k)}|\overline{E}'_k|
% \end{align*}
% and $  \mathcal Y_{ii}\overline E'_i=\mathcal Y_{ii}e^{j\alpha_i}|\overline{E}'_i|$. Hence, 
\begin{align*}
  \overline I_i&% =\mathcal Y_{ii}\overline E'_i+\sum_{k\in\mathcal N_i}\mathcal Y_{ik}e^{-j\delta_{ik}}\overline E'_k
  =\mathcal Y_{ii}e^{j\alpha_i} |\overline{E}'_i|+\sum_{k\in\mathcal N_i}\mathcal Y_{ik}e^{-j\delta_{ik}}e^{j\alpha_k} |\overline{E}'_k|, \qquad \forall i\in\V.
\end{align*}
By defining the angles\footnote{Note that the angle $\theta_i$ represents the \emph{voltage angle} of generator $i$ with respect to the synchronous rotating reference frame.} $\theta_i:=\delta_i+\alpha_{i}$ it can be shown that the electrical power supplied by the synchronous machine amounts to 
\begin{align*}
P_{ei}&=\Re(  \overline E_i'\overline I_i^*)=\Re(  \overline E_i'^*\overline I_i)=\Re\Big(\mathcal Y_{ii}|\overline{E}_i'|^2+\sum_{k\in\mathcal N_i}\mathcal Y_{ik}e^{-j(\delta_{ik}+\alpha_{ik})} |\overline{E}_i'||\overline{E}_k'|\Big)\\
&=G_{ii}|\overline E'_i|^2-\sum_{k\in\mathcal N_i}(G_{ik}\cos \theta_{ik} +B_{ik}\sin \theta_{ik}) |\overline E'_i||\overline E'_k|.
\end{align*}
It is convenient to express the system dynamics in terms of the voltage angles $\theta_i$. By noting that $\alpha_i$ is constant\footnote{Note that for the third-order model $\alpha_i=0$ implying that in this case $\theta_i$ is equal to the rotor angle $\delta_i$ with respect to the synchronous rotating reference frame.} it follows that $\dot \theta_i=\dot \delta_i=\Delta\w_i$. Hence, the multi-machine classical model with nonzero transfer conductances is described by 
\begin{equation}
\begin{aligned}\label{eq:SG2multimach}
\dot \theta_i&=\Delta \w_i\\
  M_i\Delta \dot  \w_i&=-D_i\Delta \w_i+P_{mi}-G_{ii}|\overline E'_i|^2\\
&+\sum_{k\in\mathcal N_i}(G_{ik}\cos \theta_{ik} +B_{ik}\sin \theta_{ik}) |\overline E'_i||\overline E'_k|, \qquad i\in\V.
\end{aligned}
\end{equation}
\begin{remark}[Purely inductive network]
  Note that in a purely inductive network $G=0$ and $B_{ik}\leq 0$ for all $i,k$.   The resulting multi-machine network, commonly referred to as the \emph{swing equations}, is often used in power network stability studies, see e.g. \cite{LHMNLC,AGC_ACC2014,zhangpapaautomatica,swing-claudio}. 
\end{remark}
\begin{remark}[Load nodes]
  In the multi-machine models constructed in this section it is assumed that each node in the network represents a synchronous machine. However, a more realistic model of a power network can be obtained by making a distinction between generator and load nodes \cite{alv_meng_power_coupl_market,firstSPM}. This is beyond the scope of the present paper. Instead, we assume that some synchronous machines act as \emph{synchronous motors} for which the injected mechanical power is \emph{negative}.  
\end{remark}

\section{Energy functions}
\label{sec:energy-analysis}

When analyzing the stability of a synchronous machine (or a multi-machine network) it is desired to search for a suitable Lyapunov function. Often the physical energy stored in the system can be used as a Lyapunov function for the zero-input case. In this section we derive the energy functions of the reduced order models of the synchronous machine. In addition, the energy functions corresponding to the transmission lines are obtained. 

\subsection{Synchronous machine}
The physical energy stored in a synchronous machine consists of both an electrical part and a mechanical part. We first derive the electrical energy of the synchronous machine.

\subsubsection{Electrical energy}
In this section we search for an expression for the electrical energy of the reduced order models for the synchronous machine. A natural starting point is to look at the electrical energy of the full-order system and rewrite this in terms of the state variables of the reduced order system. Recall that the electrical energy in the $d$- and $q$-axis of the full-order system is respectively given by\footnote{For notational convenience the subscript $i$ is omitted in this section.} 
\begin{align*}
H_{dq}= H_d+H_q&= \F\begin{bmatrix}
    \Psi_d\\
    \Psi_f\\
    \Psi_{D}
  \end{bmatrix}^T
  \begin{bmatrix}
    L_d&kM_f&kM_D\\
    kM_f&L_{f}&L_{fD}\\
    kM_D&L_{fD}&L_{D}
  \end{bmatrix}^{-1}
  \begin{bmatrix}
    \Psi_d\\
    \Psi_f\\
    \Psi_{D}
  \end{bmatrix}\\&+    \F\begin{bmatrix}
    \Psi_q\\
    \Psi_g\\
    \Psi_{Q}
  \end{bmatrix}^T
  \begin{bmatrix}
    L_q&kM_g&kM_{Q}\\
    kM_g&L_g&L_{gQ}    \\
    kM_{Q}&L_{gQ}&L_{Q}
  \end{bmatrix}^{-1}
\begin{bmatrix}
    \Psi_q\\
    \Psi_g\\
    \Psi_{Q}
  \end{bmatrix}.
\end{align*}
%and note that the energy is split up in the $d$- and $q$-axes. 
Using the definitions of $E_q',E_q''$ and the reactances $X_d,X_d',X_{d}''$ we can, after involved rewriting\footnote{To obtain \eqref{eq:Hdpsid}  requires not only computing the inverse of the inductance matrices but also to appropriately eliminate the appropriate parameters and variables used in the model \eqref{eq:Mhfullordermodel}. Our calculations have been verified by computer algebra program Mathematica 11 $\textregistered$.}, %${\textcopyright}$
%$\textregistered$
 express the electrical energy in the $d$-axis as
\begin{align}\label{eq:Hdpsid}
H_{d}  &= \frac12 \begin{bmatrix}
    \Psi_d\\E_q'\\E_q''
  \end{bmatrix}^T\left [
\begin{array}{ccccc}
 \frac{\w_s }{X_d''} & 0                                                                        & -\frac{1}{X_d''}                                    \\
 0                      & \frac{1}{\w_s(X_d-X_d')}+\frac{1}{\w_s(X_d'-X_d'')} & -\frac{1}{\w_s  \left(X_d'-X_d''\right)}         \\
 -\frac{1}{X_d''}       & -\frac{1}{\w_s  \left(X_d'-X_d''\right)}                              & \frac{X_d'}{\w_s  \left(X_d'-X_d''\right) X_d''} \\
\end{array}
\right]
  \begin{bmatrix}
    \Psi_d\\E_q'\\E_q''
  \end{bmatrix}
\end{align}
and a similar expression for the energy $H_{q}$ can be derived for the $q$-axis. 

\paragraph{Sixth-order model}
We can also express the electrical energy \eqref{eq:Hdpsid} in term of the currents $I_d,I_q$ as follows. First, by Assumption \ref{ass:ssemfs} we eliminate $\Psi_d,\Psi_q$ by substituting $\Psi_q=-\w_s^{-1}(V_d+RI_d) ,\Psi_d=\w_s^{-1}(V_q+RI_q)$. %V_d=-RI_d-\w_s \Psi_q, V_q=-RI_q+\w_s \Psi_d
Then $V_d,V_q$ can be eliminated by substituting \eqref{eq:VE}, that is,  $V_d=E_d''-RI_d-X_q''I_q, V_q=E_q''-RI_q+X_d''I_d$. Consequently,  for the sixth-order model the electrical energy stored in the machine takes the form
\begin{align}\label{eq:ensubtransq}
H_{d}  &=\frac{1}{2\w_s}  \begin{bmatrix}
    I_d\\E_q'\\E_q''
  \end{bmatrix}^T\left [
\begin{array}{ccccc}
  X_d'' & 0                                                               & 0                           \\
  0                & \frac{1}{X_d-X_d'}+\frac{1}{X_d'-X_d''} & -\frac{1}{X_d'-X_d''}                       \\
  0 & -\frac{1}{X_d'-X_d''}                                            & \frac{1}{X_d'-X_d''} \\
\end{array}
\right]
  \begin{bmatrix}
    I_d\\E_q'\\E_q''
  \end{bmatrix},
\end{align}
and a similar expression is obtained for the $q$-axis by exchanging the $dq$-subscripts. Remarkably, this is exactly the energy stored in the generator equivalent circuits illustrated in Figure \ref{fig:gencirc} in case $E_f=0$. % Here observe that  
% % \begin{align*}
% %   \frac{X_d-X_d''}{\left(X_d-X_d'\right) \left(X_d'-X_d''\right)}&= 
% %   \frac{X_d-X_d'+X_d'-X_d''}{\left(X_d-X_d'\right) \left(X_d'-X_d''\right)}\\
% % &=\frac{1}{X_d-X_d'}+\frac{1}{X_d'-X_d''}
% % \end{align*}
% % and that 
%  the energy stored in the center reactances as in Figure \ref{fig:gencirc} is given by 
% \begin{align}\label{eq:combreact2} \frac12LI_d^2&=\frac{1}{2\w_s}(X_d'-X_d'')I_d^2=\frac{1}{2\w_s}(X_d'-X_d'')(\frac1{X_d'-X_d''}(E_q'-E_q''))^2\\
% &=\frac{1}{2\w_s}\frac1{X_d'-X_d''}
%   \begin{bmatrix}
%     E_q'&E_q''
%   \end{bmatrix}
%           \begin{bmatrix}
%             1&-1\\
%             -1&1
%           \end{bmatrix}
%                 \begin{bmatrix}
%                   E_q'\\E_q''
%                 \end{bmatrix}.\nonumber
% \end{align}

\paragraph{Fifth-order model}
For the fifth-order model we have that $E_{d}'=0$ implying that the electrical energy in the $q$-axis modifies to 
\begin{align}\label{eq:ensubtransdEd0}
H_{q}  &=\frac{1}{2\w_s}  \begin{bmatrix}
    I_q\\E_d''
  \end{bmatrix}^T\left [
\begin{array}{ccccc}
  X_q'' &  0                     \\
  0     &  \frac{1}{X_q'-X_q''}  \\
\end{array}
\right]
  \begin{bmatrix}
    I_q\\E_d''
  \end{bmatrix},
\end{align}
while the expression for $H_d$ remains identical to the one for the sixth-order model, see equation \eqref{eq:ensubtransq}.

\paragraph{Lower-order models}
Since for the fourth, third and second-order model the subtransient dynamics is neglected, we can substitute \eqref{eq:subtrans0} into \eqref{eq:ensubtransq} such that the electrical energy $H_{dq}:=H_d+H_q$ can be written as 
\begin{align}\label{eq:entransq}
H_{dq}  &=\frac{1}{2\w_s}  \begin{bmatrix}
    I_d\\E_q'
  \end{bmatrix}^T\left [
\begin{array}{ccccc}
  X_d' & 0                \\
  0     & \frac1{X_d-X_d'} 
\end{array}
\right]
  \begin{bmatrix}
    I_d\\E_q'
  \end{bmatrix}
+\frac{1}{2\w_s}  \begin{bmatrix}
    I_q\\E_d'
  \end{bmatrix}^T\left [
\begin{array}{ccccc}
  X_q' & 0                \\
  0     & \frac1{X_q-X_q'} 
\end{array}
\right]
  \begin{bmatrix}
    I_q\\E_d'
  \end{bmatrix}
\end{align}
and for the third-order model we have $E_d'=0$. 
\begin{remark}[Synchronous machines reactances as part of line reactances]\label{rem:Xdpartnetwork}
  If the (sub)transient saliency is neglected then the reactance $X_d' \ (X_d'')$ can considered as part the (transmission) network, see Section \ref{sec:multi-machine-models}. Therefore, the energy stored in this reactance will be part of the energy stored in the transmission lines which will be discussed in Section \ref{sec:transmission-lines}. As a result, the part of the energy \eqref{eq:entransq} corresponding with $I_d,I_q$ can be disregarded here. For example, for the \text{fourth-,} third- and second-order model the energy function associated to the electrical energy stored in the generator circuit is given by
  \begin{align}\label{eq:Hdq234o} H_{dq}&=\frac1{2\w_s}\frac{(E_{q}')^2}{X_{d}-X_{d}'}+\frac1{2\w_s}\frac{(E_{d}')^2}{X_{q}-X_{q}'},
  \end{align}
where $E_d'=0$ for the third-order model. 
\end{remark}
Bearing in mind Remark \ref{rem:Xdpartnetwork} and noting that for the second-order model the voltages $E_q',E_d'$ are constant, it follows that the electrical energy \eqref{eq:Hdq234o} is constant as well. 

\subsubsection{Mechanical energy}
The rotational kinetic energy of synchronous machine $i$ is given by
\begin{align}\label{eq:Hmi}
H_{mi}=\frac12J_i\w_i^2=\frac1{2}J_i^{-1}p_i^2
\end{align}
where we recall that the angular momentum is defined by $p_i=J_i\w_i$. 

\subsection{Inductive transmission lines}
\label{sec:transmission-lines}

\subsubsection{Sixth- and fifth-order models}
Consider an inductive transmission line between nodes $i$ and $k$ at steady state, see  Figure \ref{fig:indline}.
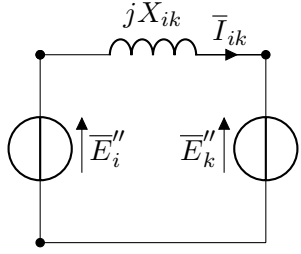
\begin{figure}
\begin{center}
\begin{circuitikz}
\draw[]  (0,0)   to [european voltage source,*-*,v_=$\overline E_{i}''$]  ++(0,2.5)
 to [american inductor,l^=$ jX_{ik} $,*-*,i=$\overline I_{ik}$% ,v<=$\overline E_{ik}$
 ] ++(3,0)
 to [european voltage source,v_<=$\overline E_{k}''$] ++(0,-2.5)  
to ++(-3,0) 
;
\end{circuitikz}
\end{center}
\caption{An inductive transmission line at steady state. The internal voltages $\overline E_{i}'', \overline E_{k}''$ are expressed in the corresponding local $dq0$-reference frame.} 
\label{fig:indline}
\end{figure}
When expressed in the local $dq$-reference frame of synchronous machine $i$, we observe  from  Figure \ref{fig:indline} that 
  \begin{align}\label{eq:56-current-voltage}
    jX_{ik}\overline I_{ik}=\overline{E}_i''-e^{-j\delta_{ik}}\overline{E}_k''.
  \end{align}
By equating the real and imaginary part of \eqref{eq:56-current-voltage} we obtain
\begin{align}\label{eq:IVindline}
 X_{ik} \begin{bmatrix}
    I_{qik}\\
    -I_{dik}
  \end{bmatrix}&=
     \begin{bmatrix}
   E_{di}''-E_{dk}'' \cos \delta_{ik}+E_{qk}'' \sin \delta_{ik}\\
   E_{qi}''-E_{dk}'' \sin \delta_{ik}-E_{qk}''\cos \delta_{ik}
  \end{bmatrix}.
\end{align}
Note that the energy of the inductive transmission line between nodes $i$ and $k$ is given by 
$$H_{ik}=\frac12L_{ik}\overline I_{ik}^*\overline I_{ik}=\frac{X_{ik}}{2\w_s}(I_{dik}^2+I_{qik}^2)$$ 
which by \eqref{eq:IVindline}  can be written as 
\begin{equation}\label{eq:EnIndTransLines}
\begin{aligned}
  H_{ik}% &=\frac1{2\w_sX_{ik}}((E_{di}''-E_{dk}'' \cos \delta_{ik}\\&+E_{qk}'' \sin \delta_{ik})^2+(   E_{qi}''-E_{dk}'' \sin \delta_{ik}-E_{qk}''\cos \delta_{ik})^2)\\
&=-\frac{B_{ik}}{\w_s}\Big[\left(E_{di}'' E_{qk}''-E_{dk}'' E_{qi}''\right)\sin \delta _{ik}-  \left(E_{di}'' E_{dk}''+E_{qi}'' E_{qk}''\right)\cos \delta _{ik}\\&+\tfrac12E_{di}''^2+\tfrac12E_{dk}''^2+\tfrac12E_{qi}''^2+\tfrac12E_{qk}''^2\Big]
\end{aligned}
\end{equation}
where $B_{ik}=-\frac1{X_{ik}}<0$ is the susceptance of transmission line $(i,k)$ \cite{schiffer2016survey}.  

\subsubsection{Fourth- and third-order models}
For the fourth- and third-order model the transient reactances\footnote{Provided that the transient saliency is neglected, i.e., $X_{di}'=X_{qi}'$ for all $i\in\V$.} $X_{di}'$ can be considered as part of the network implying that the energy in the transmission lines can be obtained by replacing the subtransient voltages by the transient voltages in \eqref{eq:EnIndTransLines}. For the third-order model $E_{di}'=0$ for all $i\in\V$ so that the energy function associated to the transmission line between node $i$ and $k$ simplifies to 
\begin{equation}\label{eq:EnIndTransLines3order}
\begin{aligned}
  H_{ik}&=-\frac{B_{ik}}{\w_s}\left(\tfrac12E_{qi}'^2+\tfrac12E_{qk}'^2- E_{qi}' E_{qk}' \cos \delta _{ik}\right).
\end{aligned}
\end{equation}

\subsubsection{Second-order model}
For the second-order model it is convenient to represent transient voltages as $\overline E_i'=|\overline E_i'|e^{j\alpha_i}$ where $\alpha_i$ is the voltage angle of $\overline E_i'$ with respect to the rotor angle. Then, by defining the voltages angles $\theta_i=\delta_i+\alpha_i$ as in Section \ref{sec:class-multi-mach}, the energy in the transmission line\footnote{Where the subtransient voltages are replaced by the transient voltages.} \eqref{eq:EnIndTransLines} takes the much simpler form
\begin{equation}\label{eq:EnIndTransLines2order}
\begin{aligned}
  H_{ik} & =-\frac{B_{ik}}{2\w_s}(\overline E_i'-\overline E_k'e^{-j\delta_{ik}})^*(\overline E_i'-\overline E_k'e^{-j\delta_{ik}})\\
         & =-\frac{B_{ik}}{2\w_s}(|\overline E_i'|^2-2|\overline E_i'||\overline E_k'|\cos\theta_{ik}+|\overline E_k'|^2).
\end{aligned}
\end{equation}

\subsection{Total energy}
The total energy of the multi-machine system is equal to the sum of the previously mentioned energy functions 
\begin{align}\label{eq:Htotal}
  H=\sum_{i\in\V}\left(H_{di}+H_{qi}+H_{mi}\right)+\sum_{(i,k)\in\mathcal E}H_{ik},
\end{align}
where the expressions for each individual energy function depends on the order of the model. The resulting energy function $H$  could serve as a candidate Lyaponuv function for the stability analysis of the multi-machine power network (with zero inputs). 

% \subsubsection{Lower order models}
% The total energy in the fourth, third and second-order model is now given by 
% \begin{align*}
% H  &=\frac1{2\w_s}\sum_{i\in\V}\left(M_i \w_i^2+\frac{(E_{qi}')^2}{X_{di}-X_{di}'}+\frac{(E_{di}')^2}{X_{qi}-X_{qi}'}\right)\\
% &+\frac1{2\w_s}\sum_{(i,k)\in\E}\left(-B_{ik}(|\overline{E}'_i|^2-2|\overline{E}'_i||\overline{E}'_k|\cos \theta_{ik}+|\overline{E}'_k|^2)\right)\\
% &=\frac1{2\w_s}\sum_{i\in\V}\left(M_i \w_i^2+\frac{(E_{qi}')^2}{X_{di}-X_{di}'}+\frac{(E_{di}')^2}{X_{qi}-X_{qi}'}+B_{ii}|\overline{E}'_i|^2-\sum_{k\in\mathcal N_i}B_{ik}|\overline{E}'_i||\overline{E}'_k|\cos \theta_{ik}\right)
% \end{align*}
% where we used the fact that $B_{ii}=\sum_{k\in\mathcal N_i}B_{ik}$ \cite{schiffer2016survey}. 
% \begin{remark}
%   If there is no additional screening in the $q$ axis ($E_{di}'=0$) then the (shifted) total energy is identical with the one presented in \cite{swing-claudio-volt-good}.
% \end{remark}

%\subsection{Final remarks}
\begin{remark}[Common factor $\w_s^{-1}$ in energy function]
It is observed that each of the individual energy functions appearing in \eqref{eq:Htotal} contains a factor $\w_s^{-1}$. Therefore, a modified version of the energy function defined by $U=\w_sH$ % such that no $\w_s$ is appearing in it 
can also be used as a Lyapunov function for the multi-machine system. However, the function $U$ does not have the dimension of energy anymore, but has the dimension of power instead. In fact, in most of the literature these modified energy functions\footnote{Which are sometimes incorrectly called \emph{energy functions} as well.} (without the factor $\w_s^{-1}$) are (part of) the collection of Lyapunov functions used to analyze the stability of the power network, see e.g. \cite{LHMNLC,trip2016internal,zhangpapaautomatica,zhangpapa,AGC_ACC2014,pai1989energy}.
\end{remark}

%\subsubsection{Port-Hamiltonian representation}
 % By considering $P_{ei}=E_{di}''I_{di}+E_{qi}''I_{qi}+(X_{di}''-X_{qi}'')I_{di}I_{qi}$ as input to the synchronous machine we can put the sixth-order model \eqref{eq:SG6order} in \pH\ form as 
% % this will have a huge G matrix. 
% \begin{equation}\label{eq:pH6singlephatE}
% \begin{aligned}
% \begin{bmatrix}
%   \dot p_i\\
% \dot  \delta_i\\
% \dot  E_{qi}'\\
% \dot  E_{di}'\\
% \dot E_{qi}''\\
% \dot E_{di}''
% \end{bmatrix}&=\w_s
%   \begin{bmatrix}
%     0&-1&0&0&0&0\\
%     1&0&0&0&0&0\\ 0&0&-\frac{X_{di}-X_{di}'}{T_{doi}'}&0&-\frac{X_{di}-X_{di}'}{T_{doi}'}&0\\  0&0&0&-\frac{X_{qi}-X_{qi}'}{T_{qoi}'}&0&-\frac{X_{qi}-X_{qi}'}{T_{qoi}'}\\
%     0&0&0&0&-\frac{X_{di}'-X_{di}''}{T_{doi}''}&0\\
%     0&0&0&0&0&-\frac{X_{qi}'-X_{qi}''}{T_{qoi}''}
%   \end{bmatrix}
% \nabla H_{SG_i} +
% \begin{bmatrix}
% 1&0\\
% 0&0\\
% 0&\frac1{T_{doi}'}\\    
% 0&0\\
% 0&0\\
% 0&0
%   \end{bmatrix}
%   \begin{bmatrix}
%     P_{mi}-P_{ei}\\
%     E_{fi}
%   \end{bmatrix}
% \\
% y_i&=
%    \begin{bmatrix}
%      1&0&0&0&0&0\\
%      0&0&\frac1{T_{doi}'}&0&0&0
%    \end{bmatrix}
% \nabla_iH=
% \begin{bmatrix}
%   p_i/(\w_sM_i)\\
%   \frac1{T_{doi}'\w_s}\left[(\frac1{\hat X_{di}}+\frac1{\hat X_{di}'})E_{qi}'-\frac1{\hat X_{di}'}E_{qi}''\right]
% \end{bmatrix}\\
% &=\frac1{\w_s}
%   \begin{bmatrix}
%     \Delta \w_{i}\\
%       \frac1{T_{doi}'}\left[\frac1{\hat X_{di}}E_{qi}'-\frac1{\hat X_{di}'}(E_{qi}''-E_{qi}')\right]
%   \end{bmatrix}
% \end{aligned}
% \end{equation}

\section{Port-Hamiltonian framework}
\label{sec:port-hamilt-repr}
By using the energy function established in the previous section, a convenient representation of the multi-machine models of Section \ref{sec:multi-machine-models} can be obtained. This is based on the theory of \pH\ systems, which yields a systematic framework for network modelling of multi-physics systems. In particular, we show in this section that the complex multi-machine systems \eqref{eq:multimach6},~\eqref{eq:multimach3},~\eqref{eq:SG2multimach} admit a simple \pH\ representation. 
% In this section we establish the \pH\ representations of the multi-machine models  considered in Section \ref{sec:multi-machine-models}. 
Finally, some important passivity properties are proven for the resulting systems. 

\subsection{Sixth-order model}

\subsubsection{Energy in the transmission lines}
Recall from \eqref{eq:EnIndTransLines} that the energy stored in the inductive transmission line between node $i$ and $k$ is given by
\begin{equation}\label{eq:EnIndTransLines2}
\begin{aligned}
  H_{ik}
&=-\frac{B_{ik}}{\w_s}\Big[\left(E_{di}'' E_{qk}''-E_{dk}'' E_{qi}''\right)\sin \delta _{ik}-  \left(E_{di}'' E_{dk}''+E_{qi}'' E_{qk}''\right)\cos \delta _{ik}\\&+\tfrac12E_{di}''^2+\tfrac12E_{dk}''^2+\tfrac12E_{qi}''^2+\tfrac12E_{qk}''^2\Big]
\end{aligned}
\end{equation}
where $B_{ik}=-\frac1{X_{ik}}<0$ according to \cite{schiffer2016survey}. Observe that the gradient of $H_{ik}$ takes the form 
\begin{align*}
  \begin{bmatrix}
    \frac{\p H_{ik}}{\p \delta_i}\\
    \frac{\p H_{ik}}{\p E_{qi}''}\\
    \frac{\p H_{ik}}{\p E_{di}''}
  \end{bmatrix}=\frac{B_{ik}}{\w_s}
  \begin{bmatrix}
    (E_{qi}''E_{dk}''-E_{di}''E_{qk}'')\cos \delta_{ik}-(E_{di}'' E_{dk}''+E_{qi}'' E_{qk}'')\sin \delta_{ik}\\
-E_{qi}''+E_{qk}'' \cos \delta_{ik}-E_{dk}''\sin\delta_{ik}\\
-E_{di}''+E_{dk}''\cos\delta_{ik}+E_{qk}''\sin\delta_{ik}
  \end{bmatrix}.
\end{align*}
After defining the total energy stored in the transmission lines by  $H_{T}=\sum_{(i,k)\in\E}H_{ik}$, we obtain likewise
\begin{align*}
  \begin{bmatrix}
    \frac{\p H_{T}}{\p \delta_i}\\
    \frac{\p H_{T}}{\p E_{qi}''}\\
    \frac{\p H_{T}}{\p E_{di}''}
  \end{bmatrix}% &=\sum_{k\in\mathcal N_i}\frac{B_{ik}}{\w_s}
%   \begin{bmatrix}
%     (E_{qi}''E_{dk}''-E_{di}''E_{qk}'')\cos \delta_{ik}-(E_{di}'' E_{dk}''+E_{qi}'' E_{qk}'')\sin \delta_{ik}\\
% -E_{qi}''+E_{qk}'' \cos \delta_{ik}+E_{dk}''\sin\delta_{ik}\\
% -E_{di}''+E_{dk}''\cos\delta_{ik}+E_{qk}''\sin\delta_{ik}
%   \end{bmatrix}\\
&=\frac1{\w_s}
  \begin{bmatrix}
   \sum_{k\in\mathcal N_i}B_{ik}[ (E_{qi}''E_{dk}''-E_{di}''E_{qk}'')\cos \delta_{ik}-(E_{di}'' E_{dk}''+E_{qi}'' E_{qk}'')\sin \delta_{ik}]\\
-B_{ii}E_{qi}''+\sum_{k\in\mathcal N_i}B_{ik}(E_{qk}'' \cos \delta_{ik}+E_{dk}''\sin\delta_{ik})\\
-B_{ii}E_{di}''+\sum_{k\in\mathcal N_i}B_{ik}(E_{dk}''\cos\delta_{ik}+E_{qk}''\sin\delta_{ik})
  \end{bmatrix}\\&=\frac1{\w_s}
    \begin{bmatrix}
      P_{ei}\\
      -I_{di}\\
      I_{qi}
    \end{bmatrix}
\end{align*}
where we have used the fact that $B_{ii}=\sum_{k\in\mathcal N_i}B_{ik}$ and  equations  \eqref{eq:currentsmulti}, \eqref{eq:Pei6order}. 

\subsubsection{Electrical energy synchronous machine}
Further notice that the electrical energy stored in the $d$-axis in machine $i$  is given by
\begin{align*}
H_{di}=\frac1{2\w_s}
  \begin{bmatrix}
    E_{qi}'&
    E_{qi}''
  \end{bmatrix}
             \begin{bmatrix}
               \frac{1}{X_{di}-X_{di}'}+\frac{1}{X_{di}'-X_{di}''}&-\frac{1}{X_{di}'-X_{di}''}\\
-\frac{1}{X_{di}'-X_{di}''}&\frac{1}{X_{di}'-X_{di}''}
             \end{bmatrix}
  \begin{bmatrix}
    E_{qi}'\\
    E_{qi}''
  \end{bmatrix}
\end{align*}
and satisfies
\begin{align*}
  % \begin{bmatrix}
  %   \frac{\p H_{di}}{\p E_{qi}'}\\
  %   \frac{\p H_{di}}{\p E_{qi}''}
  % \end{bmatrix}&=\frac1{\w_s}
  % \begin{bmatrix}             \frac{1}{X_{di}-X_{di}'}+\frac{1}{X_{di}'-X_{di}''}&-\frac{1}{X_{di}'-X_{di}''}\\
% -\frac{1}{X_{di}'-X_{di}''}&\frac{1}{X_{di}'-X_{di}''}
%              \end{bmatrix}
  % \begin{bmatrix}
  %   E_{qi}'\\
  %   E_{qi}''
  % \end{bmatrix}\\
  \begin{bmatrix}
    X_{di}-X_{di}'& X_{di}-X_{di}'
  \end{bmatrix}
  \begin{bmatrix}
    \frac{\p H_{di}}{\p E_{qi}'}\\
    \frac{\p H_{di}}{\p E_{qi}''}
  \end{bmatrix}&=\frac1{\w_s}E_{qi}'\\
  \begin{bmatrix}
    0& X_{di}'-X_{di}''
  \end{bmatrix}
  \begin{bmatrix}
    \frac{\p H_{di}}{\p E_{qi}'}\\
    \frac{\p H_{di}}{\p E_{qi}''}
  \end{bmatrix}&=\frac1{\w_s}(E_{qi}''-E_{qi}').
\end{align*}
Observe that a  similar result can be established for the energy function  $H_{qi}$ by exchanging the $d$- and $q$-subscripts. 

\subsubsection{Mechanical energy}
\label{sec:mechanical-energy}
To obtain a \pH\ representation of the multi-machine models, it is convenient to rephrase and shift the energy function \eqref{eq:Hmi}  with respect to the synchronous frequency to obtain
\begin{align*}
\bar H_{mi}=\frac12J_i\Delta\w_i^2=\frac1{2\w_s}M_i\Delta\w_i^2=\frac1{2\w_s}M_i^{-1}\mathtt{p}_i^2,
\end{align*}
where $M_i=\w_sJ_i$ and we define the variable $\mathtt p_i=M_i\Delta\w_i$.
\begin{remark}[Modified 'moment of inertia']
  Note that the quantity $\mathtt p_i$ does not represent the angular momentum of the synchronous machine but instead it is equal to $\mathtt p_i=\w_sJ\Delta\w_i$ so it has a different physical dimension. In addition,  it is shifted with respect to the synchronous frequency. 
\end{remark}
Using this definition of the Hamiltonian  $\bar H_{mi}(\mathtt{p}_i)$, it follows that its gradient satisfies
\begin{align*}
  \frac{\p \bar H_{mi}}{\p \mathtt p_i}(\mathtt{p}_i)=\frac1{\w_s}M_i^{-1}\mathtt{p}_i=\frac{\Delta \w_i}{\w_s}.
\end{align*}

\subsubsection{Port-Hamiltonian representation}
By the previous observations, the dynamics of a single synchronous machine in a multi-machine system \eqref{eq:multimach6} can be written in the form
\begin{equation}\label{eq:pH6multiphatE}
\begin{aligned}
\begin{bmatrix}
\dot  \delta_i\\
  \dot {\mathtt p}_i\\
\dot  E_{qi}'\\
\dot  E_{di}'\\
\dot E_{qi}''\\
\dot E_{di}''
\end{bmatrix}&=\w_s
  \begin{bmatrix}
   0 &  1  & 0                             & 0                             & 0                               & 0                             \\
   -1&  0  & 0                             & 0                             & 0                               & 0                             \\
   0 &  0  & -\frac{\hat X_{di}}{T_{doi}'} & 0                             & -\frac{\hat X_{di}}{T_{doi}'}   & 0                             \\
   0 &  0  & 0                             & -\frac{\hat X_{qi}}{T_{qoi}'} & 0                               & -\frac{\hat X_{qi}}{T_{qoi}'} \\
   0 &  0  & 0                             & 0                             & -\frac{\hat X_{di}'}{T_{doi}''} & 0                             \\
   0 &  0  & 0                             & 0                             & 0                               & -\frac{\hat X_{qi}'}{T_{qoi}''}
  \end{bmatrix}
\nabla_i H +
\begin{bmatrix}
0&0\\
1&0\\
0&\frac1{T_{doi}'}\\    
0&0\\
0&0\\
0&0
  \end{bmatrix}
  \begin{bmatrix}
    P_{mi}\\
    E_{fi}
  \end{bmatrix}
\\
y_i&=
   \begin{bmatrix}
     0&1&0&0&0&0\\
     0&0&\frac1{T_{doi}'}&0&0&0
   \end{bmatrix}
\nabla_iH% =
% \begin{bmatrix}
%   p_i/(\w_sM_i)\\
%   \frac1{T_{doi}'\w_s}\left[(\frac1{\hat X_{di}}+\frac1{\hat X_{di}'})E_{qi}'-\frac1{\hat X_{di}'}E_{qi}''\right]
% \end{bmatrix}\\
% &=\frac1{\w_s}
%   \begin{bmatrix}
%     \Delta \w_{i}\\
%       \frac1{T_{doi}'}\left[\frac1{\hat X_{di}}E_{qi}'-\frac1{\hat X_{di}'}(E_{qi}''-E_{qi}')\right]
%   \end{bmatrix}
\end{aligned}
\end{equation}
where 
\begin{align*} H=\sum_{i\in\V}\Big(\bar H_{mi}+H_{di}+H_{qi} \Big)+\sum_{(i,k)\in\E}H_{ik}
\end{align*}
and $\hat X_{di}:=X_{di}-X_{di}',\hat X_{di}':=X_{di}'-X_{di}'',\hat X_{qi}:=X_{qi}-X_{qi}',\hat X_{qi}':=X_{qi}'-X_{qi}''$ and $\nabla_iH$ denotes the gradient of $H$ with respect to the variables $\col(\delta_i,\mathtt{p}_i,  E_{qi}' , E_{di}', E_{qi}'', E_{di}'')$.
Note that the mechanical energy $\bar H_{mi}$ is \emph{shifted} around the synchronous frequency. % The \pH\ system of a single generator can equivalently be represented by 
% \begin{equation}\label{eq:pH6multiphatE}
% \begin{aligned}
% \begin{bmatrix}
%   \dot p_i\\
% \dot  \delta_i\\
% \dot  E_{qi}'\\
% \dot  E_{di}'\\
% \dot E_{qi}''\\
% \dot E_{di}''
% \end{bmatrix}&=\w_s
%   \begin{bmatrix}
%     0&-1&0&0&0&0\\
%     1&0&0&0&0&0\\
%     0&0&-\frac{X_{di}-X_{di}'}{T_{doi}'}&0&-\frac{X_{di}-X_{di}'}{T_{doi}'}&0\\
%     0&0&0&-\frac{X_{qi}-X_{qi}'}{T_{qoi}'}&0&-\frac{X_{qi}-X_{qi}'}{T_{qoi}'}\\
%     0&0&0&0&-\frac{X_{di}'-X_{di}''}{T_{doi}''}&0\\
%     0&0&0&0&0&-\frac{X_{qi}'-X_{qi}''}{T_{qoi}''}
%   \end{bmatrix}
% \nabla_i H \\ &+
% \begin{bmatrix}
% 1&0\\
% 0&0\\
% 0&\frac1{T_{doi}'}\\    
% 0&0\\
% 0&0\\
% 0&0
%   \end{bmatrix}
%   \begin{bmatrix}
%     P_{mi}\\
%     E_{fi}
%   \end{bmatrix}
% \\
% y_i&=
%    \begin{bmatrix}
%      1&0&0&0&0&0\\
%      0&0&\frac1{T_{doi}'}&0&0&0
%    \end{bmatrix}
% \nabla_iH% =
% % \begin{bmatrix}
% %   p_i/(\w_sM_i)\\
% %   \frac1{T_{doi}'\w_s}\left[(\frac1{\hat X_{di}}+\frac1{\hat X_{di}'})E_{qi}'-\frac1{\hat X_{di}'}E_{qi}''\right]
% % \end{bmatrix}\\
% % &=\frac1{\w_s}
% %   \begin{bmatrix}
% %     \Delta \w_{i}\\
% %       \frac1{T_{doi}'}\left[\frac1{\hat X_{di}}E_{qi}'-\frac1{\hat X_{di}'}(E_{qi}''-E_{qi}')\right]
% %   \end{bmatrix}
% \end{aligned}
% \end{equation}
% where $\hat X_{di}:=X_{di}-X_{di}', \hat X_{di}':=X_{di}'-X_{di}''$. 
% %Note that the latter term is in fact equal to the differences of the 'currents' flowing thro
% ugh the reactances $X_{di}-X_{di}'$ and $X_{di}'-X_{di}''$ in case $E_f=0$.  
By aggregating the states of the synchronous machines, i.e. $\delta=\col(\delta_1,\ldots,\delta_n)$ etc., the multi-machine  system is described by 
\begin{equation}
\begin{aligned}
\begin{bmatrix}
\dot   \delta\\
  \dot{\mathtt p}\\
\dot   E_{q}'\\
\dot   E_{d}'\\
\dot E_{q}''\\
\dot E_{d}''
\end{bmatrix}&=\underbrace{\w_s
  \begin{bmatrix}
   0  & I &  0                         & 0                         & 0                           & 0                         \\
   -I & 0 &  0                         & 0                         & 0                           & 0                         \\
   0  & 0 &  -(T_{do}')^{-1}\hat X_{d} & 0                         & -(T_{do}')^{-1}\hat X_{d}   & 0                         \\
   0  & 0 &  0                         & -(T_{qo}')^{-1}\hat X_{q} & 0                           & -(T_{qo}')^{-1}\hat X_{q} \\
   0  & 0 &  0                         & 0                         & -(T_{do}'')^{-1}\hat X_{d}' & 0                         \\
   0  & 0 &  0                         & 0                         & 0                           & -(T_{qo}'')^{-1}\hat X_{q}'
  \end{bmatrix}}_{J-R}
\nabla H          \\&+
g
  \begin{bmatrix}
    P_{m}\\
    E_{f}
  \end{bmatrix}, \qquad y=g^T
\nabla H, \qquad g=  \begin{bmatrix}
     0&I&0&0&0&0\\
     0&0&(T_{do}')^{-1}&0&0&0
   \end{bmatrix}^T,
\end{aligned}\label{eq:6multiphall}
\end{equation}
where $\hat X_{d}=\diag(\hat X_{d1},\ldots,\hat X_{dn}), T_{do}'=\diag(T_{do1}',\ldots,T_{don}')$ and likewise definitions are used for the quantities $\hat X_{d}',\hat X_{q},\hat X_q',T_{qo}',T_{do}'',T_{qo}''$. The matrix $J-R$ depicted in equation \eqref{eq:6multiphall} consists of a skew-symmetric matrix $J=-J^T$ and a symmetric matrix $R=R^T$ often called the \emph{dissipation matrix} \cite{phsurvey}. Provided that dissipation matrix is positive semi-definite, i.e. $R\geq0$, the system  \eqref{eq:6multiphall} is indeed a \pH\ representation of the sixth-order multi-machine network \eqref{eq:multimach6}.
\begin{proposition}[Positive semi-definite dissipation matrix]\label{prop:dissmatr}
  Suppose that the following holds true:
  \begin{subequations}\label{eq:neccTdq}
\begin{align}
  4(X_{di}'-X_{di}'')T_{doi}'-(X_{di}-X_{di}')T_{doi}''&\geq 0,\label{eq:neccTd}\\
  4(X_{qi}'-X_{qi}'')T_{qoi}'-(X_{qi}-X_{qi}')T_{qoi}''&\geq 0,\label{eq:neccTq}
\end{align}
\end{subequations}
for all $i\in\V$. Then \eqref{eq:6multiphall} is a \pH\ representation of the 6-order  multi-machine network \eqref{eq:multimach6}. 
\end{proposition}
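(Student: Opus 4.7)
The plan is to verify two things: (i) that when one substitutes $\nabla H$ into \eqref{eq:6multiphall}, the resulting vector field reproduces \eqref{eq:multimach6}; and (ii) that the structure matrix in \eqref{eq:6multiphall} admits a decomposition $J - R$ with $J = -J^T$ and, under the hypotheses \eqref{eq:neccTdq}, $R = R^T \ge 0$. Part (ii) is the substantive content of the proposition.

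For (i), I would simply assemble the gradient formulas already derived in Section 7.1: the partial derivatives of $H_T = \sum_{(i,k)\in\E}H_{ik}$ along $\delta_i$, $E''_{qi}$, $E''_{di}$ were shown to equal $\w_s^{-1}P_{ei}$, $-\w_s^{-1}I_{di}$, and $\w_s^{-1}I_{qi}$ respectively; the gradients of $H_{di}$ and $H_{qi}$ in the transient/subtransient emf directions produce the differences $E_{qi}' - E_{qi}''$ etc.\ needed on the right-hand side of \eqref{eq:Eqq}--\eqref{eq:Eddd}; and $\partial \bar H_{mi}/\partial \mathtt p_i = \Delta\w_i/\w_s$ recovers the swing equation. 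Multiplying through by $\w_s$ restores the time scaling in \eqref{eq:multimach6}. This is a direct bookkeeping check.

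For (ii), the structure matrix is block-diagonal in the machine index $i$ and further splits between the $d$- and $q$-axes. The top $(\delta,\mathtt p)$-block $\left[\begin{smallmatrix}0&I\\-I&0\end{smallmatrix}\right]$ is pure skew and thus contributes only to $J$. In each axis, the only off-diagonal entry is the coupling from $E''$ back into $E'$ (position $(3,5)$ for the $d$-axis, $(4,6)$ for the $q$-axis), so the symmetric part of the matrix, i.e.\ $-R$, becomes block diagonal with $2\times 2$ blocks indexed by machine and axis. Standard computation of the symmetric part places half of the off-diagonal entry on each side, giving for each $i$
\begin{equation*}
R_i^d \;=\; \w_s\begin{bmatrix} \hat X_{di}/T_{doi}' & \tfrac12\hat X_{di}/T_{doi}' \\ \tfrac12\hat X_{di}/T_{doi}' & \hat X_{di}'/T_{doi}'' \end{bmatrix},
\end{equation*}
and an analogous $R_i^q$ with all $d$'s replaced by $q$'s. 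Since $\hat X_{di} = X_{di}-X_{di}' > 0$, $\hat X_{di}' = X_{di}'-X_{di}'' > 0$ by the standard reactance ordering, both diagonal entries are positive; by Sylvester's criterion, $R_i^d \ge 0$ is equivalent to $\det R_i^d \ge 0$, which after clearing denominators and factoring out $\hat X_{di}/(4(T_{doi}')^2 T_{doi}'')>0$ becomes precisely \eqref{eq:neccTd}. The $q$-axis argument is identical and yields \eqref{eq:neccTq}. Positive semidefiniteness of the full $R$ then follows from block-diagonality.

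The only mildly delicate step is the correct symmetric/skew-symmetric decomposition of the non-symmetric lower-right block and the realization that $R$ decouples into independent $2\times 2$ blocks; once that structural observation is made, positivity reduces to a $2\times 2$ determinant inequality that matches \eqref{eq:neccTdq} on the nose. I expect no real obstacle beyond this bookkeeping.
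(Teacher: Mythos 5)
Your proposal is correct and follows essentially the same route as the paper: the paper likewise reduces positive semi-definiteness of the dissipation matrix to decoupled $2\times 2$ blocks per machine and per axis (phrased via the Schur complement rather than your determinant/Sylvester check, which is equivalent here) and shows each block condition is exactly \eqref{eq:neccTd}, \eqref{eq:neccTq}. The bookkeeping in your part (i) is what the paper carries out in the gradient computations immediately preceding the proposition, so nothing is missing.
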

\begin{proof}
  The dissipation matrix of the  system \eqref{eq:6multiphall} is equal to the symmetric part of the matrix in \eqref{eq:6multiphall} and amounts to
  \begin{align*}
R= \w_s \begin{bmatrix}
    0&0&0&0&0&0\\
    0&0&0&0&0&0\\
    0&0&-(T_{do}')^{-1}\hat X_{d}&0&-\frac12(T_{do}')^{-1}\hat X_{d}&0\\
    0&0&0&-(T_{qo}')^{-1}\hat X_{q}&0&-\frac12(T_{qo}')^{-1}\hat X_{q}\\
    0&0&-\frac12(T_{do}')^{-1}\hat X_{d}&0&-(T_{do}'')^{-1}\hat X_{d}'&0\\
    0&0&0&-\frac12(T_{qo}')^{-1}\hat X_{q}&0&-(T_{qo}'')^{-1}\hat X_{q}'
  \end{bmatrix}.
  \end{align*}
By invoking the Schur complement, $R=R^T\geq0$  if and only if 
\begin{align*}
\begin{bmatrix}
  2\frac{X_{di}-X_{di}'}{T_{doi}'}&\frac{X_{di}-X_{di}'}{T_{doi}'}\\
\frac{X_{di}-X_{di}'}{T_{doi}'}&2\frac{X_{di}'-X_{di}''}{T_{doi}''}
\end{bmatrix}\geq0 \quad  \text{and} \quad 
\begin{bmatrix}
  2\frac{X_{qi}-X_{qi}'}{T_{qoi}'}&\frac{X_{qi}-X_{qi}'}{T_{qoi}'}\\
\frac{X_{qi}-X_{qi}'}{T_{qoi}'}&2\frac{X_{qi}'-X_{qi}''}{T_{qoi}''}
\end{bmatrix}\geq0, \quad  \forall i\in\V,
\end{align*}
which holds if and only if \eqref{eq:neccTdq} holds for all $i\in\V.$ 
\end{proof}
\begin{remark}[Condition \eqref{eq:neccTdq} holds strictly]
  It should be stressed that for a typical generator $T_{doi}''\ll T_{doi}'$ and $T_{qoi}''\ll T_{qoi}'$ implying that \eqref{eq:neccTdq}   holds (in the strict sense), see also Table 4.3 of \cite{powsysdynwiley} or Table 4.2 of \cite{kundur}. 
  In fact, we claim that 
  \begin{align*}
    &4(X_{di}'-X_{di}'')T_{doi}'-(X_{di}-X_{di}')T_{doi}''\\
    &\quad =
  \kappa^2\w_s\cdot\frac{4 L_f^2 (L_f M_D - L_{fD} M_f)^2 R_D + (L_D L_f - 
      L_{fD}^2)^2 M_f^2 R_f}{L_f^2 (L_D L_f - L_{fD}^2) R_D R_f}>0\\
      &4(X_{qi}'-X_{qi}'')T_{qoi}'-(X_{qi}-X_{qi}')T_{qoi}''\\
  &\quad=\kappa^2\w_s\cdot\frac{4 L_g^2 (L_g M_Q - L_{gQ} M_g)^2 R_Q + (L_Q L_g - 
      L_{gQ}^2)^2 M_g^2 R_g}{L_g^2 (L_Q L_g - L_{gQ}^2) R_Q R_g}>0
  \end{align*}
The equalities are verified % (by Mathematica 11 $\textregistered$)
by substituting the quantities from \eqref{eq:subtransientindtime}. In addition, the inequalities hold since $R_D,R_Q,R_f,R_g,L_f,L_g,L_D,L_Q>0$ and, since $X_d-X_d'>0,X_d-X_d'>0$, we have that $M_f\neq 0,M_g\neq 0$. Finally, $L_DL_f-L_{fD}^2>0,L_QL_g-L_{fQ}^2>0$ as the inductance matrices $\mathcal L_d,\mathcal L_q$ defined in \eqref{eq:Md-ind},~\eqref{eq:Mhq-ind} are positive definite. Hence, it is emphasized that \eqref{eq:neccTdq} holds in the strict sense for a realistic synchronous machine.
\end{remark}

\begin{remark}[Fifth and fourth order models]
  It can be shown that similar \pH\ structures appear for the fifth- and fourth-order multi-machine networks using the corresponding (shifted) energy functions derived in Section \ref{sec:energy-analysis} and Section \ref{sec:mechanical-energy} as the Hamiltonian. 
\end{remark}

% --------------------------------------------------
\subsection{Third-order model}

Recall from \eqref{eq:EnIndTransLines3order} that the energy stored in the inductive transmission line between node $i$ and $k$ is given by
\begin{equation}\label{eq:EnIndTransLines3order2}
\begin{aligned}
  H_{ik}&=-\frac{B_{ik}}{\w_s}\left(\tfrac12E_{qi}'^2+\tfrac12E_{qk}'^2- E_{qi}' E_{qk}' \cos \delta _{ik}\right).
\end{aligned}
\end{equation}
Observe that the gradient of $H_{ik}$ is given by 
\begin{align*}
  \begin{bmatrix}
    \frac{\p H_{ik}}{\p \delta_i}\\
    \frac{\p H_{ik}}{\p E_{qi}'}
  \end{bmatrix}=\frac{B_{ik}}{\w_s}
  \begin{bmatrix}
    -E_{qi}' E_{qk}'\sin \delta_{ik}\\
-E_{qi}'+E_{qk}' \cos \delta_{ik}
  \end{bmatrix}.
\end{align*}
Define now the total energy stored in the transmission lines by  $H_{T}=\sum_{(i,k)\in\E}H_{ik}$. Then we obtain likewise
\begin{align*}
  \begin{bmatrix}
    \frac{\p H_{T}}{\p \delta_i}\\
    \frac{\p H_{T}}{\p E_{qi}'}
  \end{bmatrix}
&=\frac1{\w_s}
  \begin{bmatrix}
   \sum_{k\in\mathcal N_i}B_{ik} -E_{qi}' E_{qk}'\sin \delta_{ik}\\
-B_{ii}E_{qi}'+\sum_{k\in\mathcal N_i}B_{ik}E_{qk}' \cos \delta_{ik}
  \end{bmatrix}=\frac1{\w_s}
    \begin{bmatrix}
      P_{ei}\\
      -I_{di}
    \end{bmatrix}.
\end{align*}
Further notice that the electrical energy stored in machine $i$ is given by
\begin{align*}
H_{dqi}=\frac1{2\w_s}\frac{(E_{qi}')^2}{X_{di}-X_{di}'}
\end{align*}
and satisfies
\begin{align*}
   (X_{di}-X_{di}')\frac{\p H_{dqi}}{\p E_{qi}'}&=\frac1{\w_s}E_{qi}'.
\end{align*}
By the previous observations and aggregating the states, the dynamics of the third-order multi-machine system \eqref{eq:multimach3} can now be written in \pH\ form  as
\begin{equation}
\begin{aligned}
\begin{bmatrix}
\dot   \delta\\
  \dot{\mathtt p}\\
\dot   E_{q}'
\end{bmatrix}&=\w_s
  \begin{bmatrix}
     0  & I  &0 \\
     -I & -D &0 \\
     0  & 0  &-(T_{do}')^{-1}(X_d-X_d')
  \end{bmatrix}
\nabla H+g
  \begin{bmatrix}
    P_{m}\\
    E_{f}
  \end{bmatrix}, \\y&=g^T
\nabla H, \quad g^T=  \begin{bmatrix}
    0& I&0\\
    0& 0&(T_{do}')^{-1}
   \end{bmatrix}, \quad H=\sum_{i\in\V}\big(\bar H_{mi}+H_{dqi}\big)+\sum_{(i,k)\in\E}H_{ik}
\end{aligned}\label{eq:3multiphall}
\end{equation}
where $X_{d}=\diag ( X_{d1},\ldots,X_{dn}), X_{d}'=\diag ( X_{d1}',\ldots,X_{dn}'),$ and  in addition $ T_{do}'=\diag(T_{do1}',\ldots,T_{don}')$.  

%--------------------------------------------------

\subsection{Swing equations}
Recall from \eqref{eq:EnIndTransLines2order} that the energy stored in the inductive transmission line between node $i$ and $k$ is given by
\begin{equation}\label{eq:EnIndTransLines2order2}
\begin{aligned}
  H_{ik}&=         & =-\frac{B_{ik}}{2\w_s}(|\overline E_i'|^2-2|\overline E_i'||\overline E_k'|\cos\theta_{ik}+|\overline E_k'|^2).
\end{aligned}
\end{equation}
Define now the total energy stored in the transmission lines by  $H_{T}=\sum_{(i,k)\in\E}H_{ik}$ and observe that the gradient of $H_{T}$ with respect to the transformed angle $\theta$ is given by 
\begin{align*}
    \frac{\p H_{T}}{\p \theta_i}=-\sum_{j\in\mathcal N_i}\frac{B_{ik}}{\w_s}|\overline E_{i}'|| \overline E_{k}'|\sin \theta_{ik}.
\end{align*}
For the second-order model the electrical energy stored in the generator circuits is constant and can therefore be omitted from the Hamiltonian without loss of generality. By the previous observations and aggregating the states, the dynamics of the second-order multi-machine system \eqref{eq:SG2multimach} with $G=0$ can be written in \pH\ form as
\begin{equation}
\begin{aligned}
\begin{bmatrix}
\dot   \delta\\
  \dot{\mathtt p}
\end{bmatrix}&=\w_s
  \begin{bmatrix}
    0 & I  \\
    -I& -D   
  \end{bmatrix}
\nabla H+
\begin{bmatrix}
  0\\I
\end{bmatrix}
    P_{m}
 \\y&=
 \begin{bmatrix}
   0&I
 \end{bmatrix}
\nabla H=\frac{\Delta \w}{\w_s}, \qquad H=\sum_{i\in\V}\bar H_{mi}+\sum_{(i,k)\in\E}H_{ik}.
\end{aligned}\label{eq:2multiphall}
\end{equation}

\subsection{Passivity}
Since the multi-machine systems  \eqref{eq:6multiphall}, \eqref{eq:3multiphall}, \eqref{eq:2multiphall} are in the \pH\ form 
\begin{equation}
\begin{aligned}
  \dot x&=(J-R)\nabla H(x)+g u\\
  y&=g^T\nabla H(x)
\end{aligned}\label{eq:consJRpHsys}
\end{equation}
with constant matrices $J=-J^T,R=R^T\geq0$, they satisfy the following shifted passivity property.   
\begin{proposition}[Shifted passivity]\label{prop:shpass}
  Let $\bar u$ be a constant input and suppose there exists a corresponding equilibrium $\bar x$ to \eqref{eq:consJRpHsys} that satisfies $\nabla^2 H(\bar x)>0$.  Then the system \eqref{eq:consJRpHsys} is passive w.r.t. the shifted external port-variables $\tilde u:=u-\bar u, \tilde y:=y-\bar y$ with  $\bar y=g^T\nabla H(\bar x)$, and the local storage function is given by $\bar H(x):=H(x)-(x-\bar x)^T\nabla H(\bar x)-H(\bar x)$.
\end{proposition}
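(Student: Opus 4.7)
The plan is to show that the \emph{Bregman function}
\[
\bar H(x) = H(x) - (x-\bar x)^T\nabla H(\bar x) - H(\bar x)
\]
qualifies as a local storage function, and that its time derivative along trajectories of \eqref{eq:consJRpHsys} is bounded above by $\tilde u^T \tilde y$. The argument exploits two structural facts: the constancy of $J$ and $R$ (which is the whole point of having worked out the \pH\ representation), and the fact that $\bar x$ is an equilibrium for the constant input $\bar u$.

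First I would record the elementary properties of $\bar H$. Its gradient is $\nabla \bar H(x) = \nabla H(x) - \nabla H(\bar x)$, so $\nabla \bar H(\bar x) = 0$, and its Hessian equals $\nabla^2 H(\bar x)$ at $\bar x$. By the assumption $\nabla^2 H(\bar x) > 0$, the function $\bar H$ has a strict local minimum at $\bar x$ with value zero, so it is locally positive definite and qualifies as a valid storage function in a neighborhood of the equilibrium.

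Next I would use that $\bar x$ is an equilibrium for input $\bar u$, i.e.\ $(J-R)\nabla H(\bar x) + g\bar u = 0$. Subtracting this from the dynamics yields the shifted form
\[
\dot x = (J-R)\bigl(\nabla H(x) - \nabla H(\bar x)\bigr) + g\tilde u.
\]
Computing the derivative of $\bar H$ along trajectories gives
\begin{align*}
\dot{\bar H} &= \bigl(\nabla H(x) - \nabla H(\bar x)\bigr)^T \dot x \\
&= \bigl(\nabla H(x) - \nabla H(\bar x)\bigr)^T (J-R)\bigl(\nabla H(x) - \nabla H(\bar x)\bigr) + \bigl(\nabla H(x) - \nabla H(\bar x)\bigr)^T g\tilde u.
\end{align*}
Skew-symmetry of $J$ kills the $J$-contribution, while $R = R^T \geq 0$ makes the $R$-contribution nonpositive. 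Recognizing $g^T\bigl(\nabla H(x) - \nabla H(\bar x)\bigr) = y - \bar y = \tilde y$ finishes the bound $\dot{\bar H} \le \tilde u^T \tilde y$, which is exactly the shifted passivity inequality.

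The only delicate point is the decisive use of \emph{constancy} of $J$ and $R$: the manipulation that moves the equilibrium condition inside the $(J-R)$ factor, and the vanishing of the skew-symmetric part, both fail if $J$ or $R$ depend on $x$. This is precisely why the earlier work to put the multi-machine models \eqref{eq:6multiphall}, \eqref{eq:3multiphall}, \eqref{eq:2multiphall} into a \pH\ form with state-independent interconnection and damping matrices pays off here; without it, one would face the obstruction discussed for the full-order model \eqref{eq:pHformfull}. The positive-definite Hessian hypothesis $\nabla^2 H(\bar x) > 0$ is what upgrades the dissipation inequality to a genuine passivity certificate with a bona fide (locally) positive-definite storage function, and nothing more than that hypothesis is needed for the local statement being claimed.
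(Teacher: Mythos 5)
Your proof is correct and follows essentially the same route as the paper: the same Bregman-type shifted Hamiltonian, the same use of the equilibrium condition with constant $J-R$ to pass to the shifted dynamics, and the same dissipation inequality $\dot{\bar H} = -(\nabla\bar H)^T R\,\nabla\bar H + \tilde u^T\tilde y \le \tilde u^T\tilde y$. Your explicit remarks on why state-independence of $J$ and $R$ is essential, and on the local positive definiteness of $\bar H$, are slightly more detailed than the paper's but add no new ideas.
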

\begin{proof}
  By defining the \emph{shifted Hamiltonian} (see e.g. \cite{phsurvey}) as $\bar H(x):=H(x)-(x-\bar x)^T\nabla H(\bar x)-H(\bar x)$ as in the proposition,  the system \eqref{eq:consJRpHsys} can be rewritten as 
  \begin{align*}
    \dot x&=(J-R)\nabla H(x)+gu=(J-R)(\nabla \bar H(x)+\nabla H(\bar x))+gu\\
&=(J-R)\nabla \bar H(x)+g(u-\bar u)=(J-R)\nabla \bar H(x)+g\tilde u\\
\tilde y&=y-\bar y=g^T(\nabla H(x)-\nabla H(\bar x))=g^T\nabla \bar H(x).
  \end{align*}
The passivity follows by taking the time-derivative of the shifted Hamiltonian $\bar H$ which yields 
\begin{align*}
  \dot {\bar H}&=-(\nabla \bar H(x))^TR\nabla \bar H(x)+\tilde u^T\tilde y\leq \tilde u^T\tilde y.
\end{align*}
Since in addition $\nabla^2\bar H(\bar x)=\nabla^2 H(\bar x)>0$, it follows that $\bar H$ acts as a suitable local storage function.
\end{proof}
\begin{remark}[Hessian condition]
  To use Proposition \ref{prop:shpass} one must verify that the Hessian of the Hamiltonian evaluated at the (desired) equilibrium  is positive definite.  For the second and third-order multi-machine models a sufficient condition is established for guaranteeing that the Hessian is positive definite, see \cite{swing-claudio,de2018bregman}. It can be verified that these conditions hold for a typical operation point of the power network, i.e., for which the voltage (angle) differences are small. However, further effort is required to establish a similar condition for the higher-order multi-machine models, which preferably can be checked in a distributed fashion. 
\end{remark}
  The passivity property mentioned in Proposition \ref{prop:shpass} that the previously derived multi-machine models \eqref{eq:6multiphall}, \eqref{eq:3multiphall}, \eqref{eq:2multiphall} admit proves to be very useful when interconnection with (passive and optimal) controllers, see  in particular our previous work \cite{stegink2016optimal,stegink2017unifying} for an analysis of the third- and sixth-order models respectively. % this property is exploited  establishing a rigorous stability analysis of the sixth-order multi-machine model and \cite{stegink2015unifying} where the third-order model is considered.

\section{Conclusions and future research}
\label{sec:conclusions}
In this paper a unifying energy-based approach to the modelling of multi-machine power networks is provided. Starting from the first-principle model of the synchronous generator, reduced order models are obtained and the underlying assumptions are explained. After determining the energy functions of the reduced-order models, a \pH\ representation of the multi-machine systems is established.
% advanced models -> pH framework
In particular, it is shown that advanced multi-machine models that are much more advanced can be analyzed using the \pH\ framework. % Based on the energy functions of the system, a \pH\ representation of the model is obtained.
Moreover, the resulting \pH\ system is proven to be shifted passive with respect to its steady states. The latter property has turned out to be crucial in many contexts, in particular for the stability analysis of the (optimal) equilibria of the closed-loop system \cite{stegink2017unifying,de2018bregman,trip2016internal}.

\subsection{Future research}
The results established in this paper can be extended in many possible ways. We elaborate on the main research directions in the following.

\subsubsection{Control}
%(Passivity-based) frequency control.  
One natural extension of the work established in the present paper is to consider (distributed) control of multi-machine networks. For frequency control, this can for example be done following the lines of \cite{stegink2016optimal,DAPIsimpson2013synchronization,trip2016internal}. 
Since in the present paper we established a systematic way for obtaining the energy functions and proved (shifted) passivity of the system, we conjecture that the same kind of controllers established in these references can be applied to (purely inductive) multi-machine models where each synchronous machine is described by a 2,3,4,5 or 6th-order model. In particular, the 6th-order multi-machine case was already been in our previous work \cite{stegink2016optimal}. 
% gradient method type of controllers
Alternatively, one can continue along the lines of \cite{stegink2017unifying,LHMNLC,AGC_ACC2014,zhang1achieving} and consider controllers based on the primal-dual gradient method. 
% voltage control 
In addition, further effort is required to investigate the possibilities of (optimal) voltage control using passive controllers. One possibility is to extend the work of \cite{de2016lyapunov,de2018bregman} to high-dimensional multi-machine models. 

\subsubsection{Nonzero transfer conductances}

% Ortega: transfer conductances
Another extension to this work is to include transmission line resistances in the network. However, in \cite{ortega2005transient,bretas2003lyapunov} and references therein it is observed that in the case of nonzero transfer conductances, a Lyapunov based stability analysis can be cumbersome and involves adding nontrivial cross terms in the Lyapunov function. Even then, the stability analysis relies on a 'sufficiently small transfer conductances' assumption \cite{ortega2005transient,bretas2003lyapunov}.  
% R/X ratio constant
On the other hand, one approach that could be adopted in future research is to assume the resistive transmission lines are uniform such that the $R/X$ ratios are identical for all transmission lines. This simplifies the analysis and possibly the present work could be extended to this case (and keeping the \pH\ structure intact), for example by following the lines of \cite{de2018bregman} and references therein. 

\subsubsection{More accurate power network models}
% structure preserving models
In the present paper we considered the case that each node in the network represents a synchronous machine. A natural extension is to generalize the established results to the case where some of the nodes represent inverters or (frequency-dependent) loads instead. % i.e. to consider structure-preserving  power networks.  
% turbine and speed-governor
In addition, while advanced models of the synchronous generator are considered in this paper, there are many possible extensions to these models. For example,  models for the turbine and speed governor as considered in e.g. \cite{alv_meng_power_coupl_market,trip2017distributed,zhang1achieving} could also be taken into account. Finally, the model can be expanded such that the excitation system and the automatic voltage regulator (AVR) are included as well \cite{powsysdynwiley}.

\section*{Funding}
This work is supported by the Netherlands Organisation for Scientific Research (NWO) programme \emph{Uncertainty Reduction in Smart Energy Systems (URSES)} under the auspices of the project \emph{Energy-based analysis and control of the grid: dealing with uncertainty and markets (ENBARK)}.

\bibliographystyle{tfq}
\bibliography{mybib}

\end{document}